\documentclass{article}

\usepackage{enumerate}
\usepackage{amsmath}
\usepackage{amssymb}
\usepackage{amsthm}
\usepackage{caption}
\usepackage{geometry}
\usepackage{graphicx}
\usepackage{color}
\usepackage{url}
\usepackage{textcomp}
\usepackage{hyperref}
\usepackage{authblk}
\usepackage{subfigure}
\usepackage{setspace}
\usepackage{listings}
\usepackage{booktabs}
\usepackage{multirow}
\usepackage[bottom]{footmisc}
\usepackage{indentfirst}
\usepackage[numbers,sort&compress]{natbib}
\usepackage{tikz}
\usetikzlibrary{arrows.meta, positioning}

\newcommand{\etal}{\textit{et al}.\ }
\newcommand{\per}{\mathit{per}}

\DeclareCaptionFormat{custom-lst}{\hrule#1#2#3}
\hypersetup{colorlinks=true, linkcolor=blue, citecolor=blue}

\newtheorem{theorem}{Theorem}
\newtheorem{lemma}{Lemma}
\newtheorem{proposition}{Proposition}
\theoremstyle{remark}
\newtheorem{remark}{Remark}
\newtheorem{assumption}{Assumption}

\title{\large \bf Energy-minimized interpolatory B-spline lofting for sequences of closed contours}
\date{}
\author[a,*]{Shutao Tang}
\author[b]{Zixuan Zhao}
\affil[a]{\small Department of Mechanical Engineering, Northwestern Polytechnical University, Xi'an, China}
\affil[b]{\small The University of Chicago, 5801 S Ellis Ave, Chicago, IL 60637, United States}

\begin{document}

\maketitle

\begin{abstract}
    \noindent B-spline lofting through closed cross-sections is common in reverse engineering, 
    but interpolating each section on its own knot vector can produce an unnecessarily large control net.
    We construct a common-knot method for closed section curves of odd and even degree by minimizing the strain energy of each section subject to its interpolation constraints.
    Feasibility for closed section data and uniqueness of the resulting saddle-point solution require full row rank of the periodic collocation matrices, 
    a property that a previously proposed periodic Schoenberg--Whitney window condition does not always guarantee.
    We establish the exact parity distinction: the condition is sufficient at odd point counts but admits singular configurations at every admissible even point count.
    For the odd and even degrees used here, the natural and shifting anchor knots are nonsingular; 
    at odd point counts the window condition certifies the selected candidate, whereas at even point counts a rank test with an anchor fallback is applied.
    The lofting method therefore retains a nonsingular square subsystem for each section and uses nestedness to transfer full row rank to the rectangular common-knot system.
    Across six closed-contour models the common-knot method uses 83--98\% fewer control points than traditional interpolating lofting, 
    with section energies no larger than those of a previous knot-selection method within numerical precision.

    \vspace{1.5mm}
    \noindent {\bf Keywords:} B-spline surface; Lofting; Skinning; Closed contours; Schoenberg--Whitney condition; Energy minimization; Control-point reduction
\end{abstract}

\section{Introduction}
\label{sec:introduction}

Surface skinning or lofting~\cite{woodward1987cross,woodward1988skinning} constructs a smooth surface through a sequence of sectional curves.
In reverse engineering, these sections often come from slices of a scanned object and are closed contours when the object is tubular.
When the measured contours are clean or are treated as hard constraints, 
an interpolatory B-spline representation is desirable because it preserves the measured contours and 
produces a tensor-product surface suitable for downstream CAD operations~\cite{piegl1996nurbs}.
For interpolatory lofting, the difficulty is the size of the resulting control net.
If each section is interpolated on its own knot vector, the tensor-product surface must carry the union of all section knots, 
so the number of control points can grow far beyond the number needed by any single section~\cite{piegl2000reducing,park2003lofted,wang2008reducing,park2004new}.

For interpolating open sections, a common remedy is to interpolate all sections on one reduced knot vector.
The section interpolation systems then become under-determined, and a strain-energy minimizer selects one interpolant on the common knots~\cite{park2003lofted}.
Full row rank of the section collocation matrices guarantees feasibility of the interpolation and nonsingularity of the saddle-point system.
Wang~\etal\cite{wang2008reducing} identified the corresponding full-rank condition for open curves.
Closed sections require a periodic collocation matrix, and the open-curve Schoenberg--Whitney argument does not transfer directly.
A previous closed-contour method~\cite{tang2023control} introduced a periodic window condition and used it as the solvability test for knot selection.
The energy-minimizing strategy~\cite{park2003lofted}, however, has not been extended to closed contours; the present paper fills this gap.

Approximation lofting is the standard fitting-based route when the input is a raw scan: each section is fitted within a tolerance rather than forced through every measured point, 
which avoids reproducing measurement noise and can yield a smaller control net~\cite{piegl1996algorithm,park1996smooth,piegl2000surface,park2000method,piegl2002surface}.
Interpolation remains preferable when the sections are exact CAD curves or clean, consistently sampled contours; 
when individual measurements are hard constraints, such as gauge or certification points; 
when downstream CAD/CAE operations require the surface to pass through the given sections; or when small geometric features must not be smoothed away.
It also avoids the need to choose a fitting tolerance and to correct parameters, and it is natural for sparse sections.
The two branches share the compatibility and control-point-reduction problem, but differ in whether the measured points are hard constraints.
This paper therefore addresses the interpolation branch; 
for raw noisy scans, approximation lofting is the appropriate complementary tool, and the method below can be applied after denoising and segmentation, or when exact section fidelity is mandated.

This paper establishes the full-rank conditions needed for energy-minimizing lofting of closed contours and develops the corresponding lofting algorithm on that basis.
The previously proposed window condition has a sharp parity distinction: it is sufficient when the number of interpolation points is odd (Theorem~\ref{thm:odd-window}), 
but for every degree it admits singular configurations at admissible even point counts (Proposition~\ref{prop:singular}).
Because at even point counts the window condition may fail, the method applies a direct rank check to the selected knots of such sections, 
with the anchor knots as a fallback; the anchor configurations produce a nonsingular collocation matrix on given parameters.
Nestedness then transfers full row rank of the resulting square subsystem to the rectangular common-knot system (Theorem~\ref{thm:full-rank-lift}).
The main contributions of this work can be summarized as follows:

  \begin{itemize}
      \item \textbf{Deficiency of the periodic window condition.}
            We identify the parity defect of the periodic window condition used for closed-contour knot selection: it is sufficient at odd point counts (Theorem~\ref{thm:odd-window}), but the uniform and non-uniform counterexamples of Proposition~\ref{prop:singular} show that it fails to certify solvability at even point counts.

      \item \textbf{Periodic full-row-rank lifting theorem.}
            We prove that a nonsingular square subsystem, together with nestedness of the periodic spline spaces, certifies full row rank of the rectangular common-knot system (Theorem~\ref{thm:full-rank-lift}).

      \item \textbf{Energy-minimizing lofting for closed contours.}
            We extend the energy-minimizing lofting algorithm to closed contours, combining safe anchor configurations with a parity-aware rank check to build a common-knot method that applies to both odd and even degrees.
  \end{itemize}

The framework of this paper is organized in Figure~\ref{fig:framework}.

\begin{figure}[htbp]
    \centering
    \begin{tikzpicture}[
        >=Stealth,
        font=\small,
        node distance=7mm,
        box/.style={draw, rounded corners=2pt, align=center, inner sep=4pt, text width=6.4cm},
        tech/.style={draw, rounded corners=2pt, align=center, inner sep=4pt, text width=4.4cm},
        outbox/.style={draw, rounded corners=2pt, align=center, inner sep=6pt, text width=12cm, fill=gray!10}
    ]
        \node[box] (interp) {Closed B-spline interpolation};

        \node[tech, below=of interp, xshift=-5.2cm] (t1) {anchor knots\\ (\emph{natural}/\emph{shifting}~\cite{park2001choosing})\\[2pt] $\Rightarrow$ nonsingular collocation matrix};
        \node[tech, below=of interp] (t2) {periodic window condition~\cite{tang2023control}\\[2pt] \emph{odd} \# control points: window holds (Theorem~\ref{thm:odd-window})\\ \emph{even} \# control points: window may fail,\\ rank check detects this with some probability};
        \node[tech, below=of interp, xshift=5.2cm] (t3) {full row rank lifting (Theorem~\ref{thm:full-rank-lift})\\ transfer full row rank};

        \draw[->] (interp.south) -- (t1.north);
        \draw[->] (interp.south) -- (t2.north);
        \draw[->] (interp.south) -- (t3.north);

        \node[outbox, below=of t2] (alg) {Energy-minimizing lofting algorithm for closed contours};
        \draw[->] (t1.south) -- (t1.south |- alg.north);
        \draw[->] (t2.south) -- (alg.north);
        \draw[->] (t3.south) -- (t3.south |- alg.north);
    \end{tikzpicture}
    \caption{The framework of this paper: the \emph{natural} and \emph{shifting} methods yield nonsingular anchor knots, which, together with the periodic window condition and the full-row-rank lifting theorem, lead to the energy-minimizing lofting algorithm for closed contours.}
    \label{fig:framework}
\end{figure}

The remainder of this paper is organized as follows. Section~\ref{sec:related}
reviews related work on B-spline lofting and the solvability of periodic interpolation.
Section~\ref{sec:background} states the lofting problem and introduces periodic B-splines
on the parameter circle. Section~\ref{sec:theory} analyzes the solvability of the periodic
collocation matrix, identifies the deficiency of the window condition, and establishes
the full-row-rank lifting theorem. Section~\ref{sec:certified} develops the energy-minimizing
common-knot lofting algorithm with its anchor configurations and rank checks.
Section~\ref{sec:experiments} reports experimental results on six closed-contour models.
Section~\ref{sec:discussion} discusses even-cardinality behavior and the anchor fallback, the window-factor trade-off, and the relation to periodic T-spline skinning.
Section~\ref{sec:conclusions} concludes the paper and outlines future work.
Proofs of the longer results are collected in Appendix~\ref{app:proofs}.

\section{Related work}
\label{sec:related}
This section reviews approximation-based and interpolation-based B-spline lofting, with emphasis on control-point reduction and on the full-rank conditions that govern the common-knot interpolation systems. 
It also connects classical solvability results for periodic spline interpolation.

\subsection{Approximation lofting}
Approximation lofting fits each section within a tolerance instead of forcing it through every measured point.
It is the standard fitting-based route for noisy, oversampled, or inconsistently sampled scans, where exact interpolation would reproduce measurement noise; the price is that fine features may be smoothed and the result depends on the chosen tolerance.
Because approximation methods do not impose interpolation equations, the periodic collocation rank problem studied here does not arise in the same form, although rank and regularization still affect the least-squares systems.
Approximation is therefore complementary to the present work: it is preferable for raw noisy scans, whereas interpolation is preferable when the section data are exact or must be honored exactly.

\subsubsection{Lofting from section curves}
Surface lofting (skinning) of section curves is classical~\cite{piegl1996nurbs}.
Making the section curves compatible merges their knot vectors, and this union grows with the number of curves, so the compatible representation can become prohibitively large.
Approximate skinning first makes the curves compatible by exact knot insertion and then progressively removes knots, a memory-intensive approach~\cite{piegl1996algorithm}.
Smooth surface approximation to serial cross-sections~\cite{park1996smooth} addresses the same compatibility problem, and knot removal~\cite{lyche1987knot} remains the standard tool for shrinking a merged vector afterwards.
Park~\etal\cite{park2000method} instead make the curves compatible by refitting every curve on a single common knot vector: they sample each input curve, parameterize the samples by normalized arc length, choose a common control count, build a local knot vector for each curve reflecting its parameter distribution, average the local vectors element-wise into the common one, and fit every curve on it with stretching and bending energies; their method also covers closed curves on cyclic knot vectors.
Piegl and Tiller~\cite{piegl2002surface} proposed another curve-sampling algorithm similar to the common-knot refitting of Park~\etal\cite{park2000method}, sampling the input curves and then fitting the sampled data points.

\subsubsection{Fitting rows of data points}
When the input consists of rows of measured points rather than fitted curves, each row is first fitted by a curve and the rows are then made compatible as above.
For approximation, Piegl and Tiller~\cite{piegl2000surface} fit scanned open rows of unequal size with an error-bounded scheme that reuses knots between rows and balances parameter counts among spans, 
so rows of similar shape share nearly all of their knots.
The common-knot refitting of Park~\etal\cite{park2000method} also applies to rows of data points and is directly relevant to the measured closed-row setting considered here.
The row-structured data studied here are not gridded in general, but when every row carries the same number of points they reduce to gridded (mesh-structured) data; 
tensor-product B-spline surfaces are the natural representation for both.
Efficient least-squares fitting to gridded data exploits the tensor-product structure for fast matrix assembly~\cite{merchel2022fast}, 
and low-rank spline approximation of gridded data has been studied in~\cite{mokris2024lowrank}.
Adaptive knot placement for gridded data was addressed by~\cite{park2011dominant}, where dominant columns of the data matrix determine knot locations for B-spline surface fitting.
Few of these works address the closed (periodic) setting considered here.

\subsection{Interpolating lofting and control-point reduction}
Interpolating lofting forces the surface through the section data; it is the method of choice in the scenarios listed in Section~\ref{sec:introduction}.
Early control-point reduction methods for lofted B-spline surface interpolation focused mainly on open sections.
Piegl and Tiller~\cite{piegl2000reducing} select section knots from a shared pool under the Schoenberg--Whitney condition and then use knot refinement to put the sections in a common space.
Park~\cite{park2003lofted} takes a different route: a reduced common knot vector is fixed first, and the under-determined section systems are solved by constrained energy minimization.
Wang~\etal\cite{wang2008reducing} extend the Schoenberg--Whitney condition to the under-determined common-knot systems (the SW condition extension); 
observing that these systems can lose rank, they give the corresponding repair for open curves.
Park~\etal\cite{park2004new} combine lofting with universal parametrization~\cite{lim1999universal,jung2000new}.

Closed contours introduce the additional issue that the section interpolation problem is periodic.
Interpolatory control-point reduction for closed contours was introduced by Tang~\cite{tang2023control},
who extended Piegl and Tiller's knot-reusing strategy to closed contours by replacing the open-curve
Schoenberg--Whitney test with a periodic window condition.
Table~\ref{tab:interp-summary} summarizes the control-point reduction techniques for rows of scanned data interpolation together with the solvability condition each one relies on.
For open sections, the two strategies rely on the Schoenberg--Whitney condition (SW condition for short) and its extension to under-determined common-knot systems. 
For closed sections, the energy-minimizing construction of this paper needs two key techniques: the periodic window condition, which certifies the square knot-selection subsystem, 
and the full-row-rank lifting theorem, which transfers full row rank to the rectangular common-knot system.

\begin{table}[htbp]
    \centering
    \caption{Control-point reduction techniques for rows of scanned data interpolation and the solvability condition each relies on.}
    \label{tab:interp-summary}
    \begin{tabular}{llll}
        \toprule
        Category & Strategy & Reference & Key technique\\
        \midrule
        \multirow{2}{*}{Open-section}
            & knot reusing & Piegl and Tiller~\cite{piegl2000reducing} & SW condition\\
            & energy minimization & Park~\cite{park2003lofted}, Wang~\etal\cite{wang2008reducing} & SW condition extension\\
        \multirow{2}{*}{Closed-contour}
            & knot reusing & Tang~\cite{tang2023control} & periodic window condition\\
            & energy minimization & this paper & full-row-rank lifting (Theorem~\ref{thm:full-rank-lift})\\
        \bottomrule
    \end{tabular}
\end{table}

T-spline skinning~\cite{nasri2012local,yang2012approximate,li2015surface,oh2018local,hu2020curve} addresses local refinement and non-tensor-product topology; 
these directions are complementary to the common-knot tensor-product setting considered here.
Among them, the periodic T-spline skinning of Li~\etal\cite{li2015surface} is closest to our setting, 
but its input is a family of closed NURBS section curves, whereas we start from rows of closed contour points; 
this difference is discussed in Section~\ref{sec:discussion}.

\subsection{Solvability of periodic spline interpolation}
Periodic spline interpolation differs structurally according to the dimension of the spline space.
Lee~\etal\cite{lee1983periodic} studied periodic B-spline collocation matrices in connection with infinite spline interpolation.
Odd-dimensional periodic spline spaces are weak Chebyshev, 
and their interpolation sets are characterized by a periodic Schoenberg--Whitney condition~\cite{schumaker2007spline,zeilfelder1996periodic,davydov2000weak}.
Golomb~\cite{golomb1968periodic} treated odd-degree interpolation on uniform partitions, 
while periodic odd-degree interpolation at arbitrary knots has the classical variational interpretation that underlies the natural method (Section~\ref{subsec:sufficient}).
Dubeau and Savoie~\cite{dubeau1985periodic} proved that even-degree periodic interpolation at the knots of a uniform partition is uniquely solvable exactly when the number of points is odd.
This parity effect is central to the counterexamples in Section~\ref{subsec:insufficiency}.
Meir and Sharma~\cite{meir1968convergence} asserted unisolvence for periodic cubics collocated at a fixed ratio of each knot interval; 
the counterexamples below show that this assertion fails for every even point count.
For closed B-spline curves, Park and Kim~\cite{park1996smooth} adapted knot averaging and observed its reliability for odd degrees, 
and Park~\cite{park2001choosing} introduced the \emph{natural}, \emph{shifting}, and \emph{energy} methods, where the \emph{natural} method is for odd degree and the latter two for even degree.
The natural and shifting configurations coincide with the anchor configurations analyzed in Section~\ref{subsec:sufficient}.

\section{Problem statement and preliminaries}
\label{sec:background}

For the reader's convenience, Table~\ref{tab:notation} collects the principal symbols used throughout the paper.

\subsection{The lofting problem}
\label{subsec:problem}

Given a sequence of closed contours $\mathcal{Q} = \{{\bf{Q}}_{i,j} \mid i = 0, \ldots, m;\; j = 0, \ldots, n_i\}$, 
aligned under Assumption~\ref{ass:aligned} below, we seek a closed-in-$v$ lofted B-spline surface
\begin{equation}
    \label{eq:surface}
    {\bf{S}}(u, v) \;=\; \sum_{k} \sum_{l} {\bf{P}}_{k,l}\, N_{k,p}(u)\, N^{\mathrm{per}}_{l,q,{\bf{D}}}(v)
\end{equation}
of longitudinal degree $p$ and section degree $q\ge2$ (odd or even) that interpolates all points with few control points.
Section $i$ is interpolated as a periodic B-spline curve on \emph{common domain knots} ${\bf{D}}$ shared by all sections; 
the control points of the section curves are then interpolated longitudinally (Algorithm~A9.4 of~\cite{piegl1996nurbs}).
We assume $n_i \ge q$, $m \ge p$, and distinct longitudinal parameters.
Within each closed contour consecutive points, including the pair across the seam, are distinct; 
a duplicated closing point ${\bf Q}_{i,n_i}={\bf Q}_{i,0}$ is removed before parametrization.

\begin{assumption}\label{ass:aligned}
The rows are aligned: the first point of each row is chosen so that
the seam points of consecutive rows correspond (``the line of first
points is not twisted'').  This is the standing assumption of row-wise
lofting; seam optimization is outside the scope of this paper.
\end{assumption}

For one section with points ${\bf{Q}}_0, \ldots, {\bf{Q}}_n$ (the section index is dropped when a single section is discussed), 
the parameters are the chord-length parameters of the closed contour,
\begin{equation}
    \label{eq:params}
    {\bf{T}} = \{t_j\}_{j=0}^{n}:
    \qquad
    t_0 = 0,
    \qquad
    t_j = t_{j-1} + \frac{\| {\bf{Q}}_j - {\bf{Q}}_{j-1} \|}{L},
    \qquad
    L = \sum_{j=1}^{n} \| {\bf{Q}}_j - {\bf{Q}}_{j-1} \| + \| {\bf{Q}}_0 - {\bf{Q}}_n \|,
\end{equation}
so that $t_0 < t_1 < \cdots < t_n < 1$ and the \emph{wrap gap} $d_n = 1 - t_n > 0$ closes the contour.
The parametrization is seam-normalized, $t_0 = u_0 = 0$.

Interpolating the largest section requires at least $\max_i n_i + 1$ periodic basis functions.
Thus every common-knot method satisfies the lower bound
\begin{equation}
    \label{eq:floor}
    \text{control points per section} \;=\; \hat n + q + 1 \;\ge\; \max_i(n_i) + q + 1,
\end{equation}
where $\hat n + 2$ is the size of ${\bf{D}}$ and the count includes the $q$ wrap-around control points of the standard closed representation~\cite{park1996smooth,tang2023control}.

\begin{table}[htbp]
    \centering
    \caption{Notation and definitions.}
    \label{tab:notation}
    \small
    \begin{tabular}{ll}
        \toprule
        Symbol & Meaning\\
        \midrule
        ${\bf{Q}}_{i,j}$ & data points; $m+1$ sections, $n_i+1$ points in section $i$\\
        $p$, $q$ & longitudinal degree and section degree; here $q\ge2$ (odd or even degree)\\
        ${\bf{D}} = \{u_0, \ldots, u_{\hat n + 1}\}$ & domain knots, $0 = u_0 < \cdots < u_{\hat{n}+1} = 1$, extended periodically\\
        ${\bf{T}} = \{t_j\}_{j=0}^{n}$, $\tilde t_j$, $d_n$ & chord-length parameters~\eqref{eq:params}, shifted parameters (even $q$), wrap gap\\
        ${\boldsymbol\tau} = \{\tau_j\}_{j=0}^{n}$ & collocation parameters of a section ($t_j$ for odd $q$, $\tilde t_j$ for even $q$)\\
        $N^{\mathrm{per}}_{i,q,{\bf{D}}}$ & periodic B-spline basis function centered at $u_i$, see~\eqref{eq:per-basis}\\
        ${\bf{C}}_{\hat n}$ & section curve: periodic B-spline on ${\bf{D}}$ with control-point matrix ${\bf{P}}$\\
        ${\bf{N}}({\bf{D}}, {\bf{T}})$ (or ${\bf{N}}$) & periodic collocation matrix~\eqref{eq:collocation}\\
        $\bar{{\bf{D}}}$ & selected domain knots (Procedure~\ref{lst:proc1}) whose collocation matrix is nonsingular\\
        ${\bf{D}}^{q}$, ${\bf{X}}$ & anchor knots and window bound values~\eqref{eq:anchors}\\
        $\per$, $\sigma_{\mathrm{tol}}$ & window factor (Section~\ref{subsec:certified-knots}), rank-test tolerance\\
        ${\bf{K}}$ & stiffness matrix~\eqref{eq:energy}\\
        \bottomrule
    \end{tabular}
\end{table}

\subsection{Periodic B-splines and the collocation matrix}
\label{subsec:periodic}

Let ${\bf{D}} = \{0 = u_0 < u_1 < \cdots < u_{\hat n+1} = 1\}$ be \emph{domain knots}, extended periodically by $u_{j + (\hat n + 1)} = u_j + 1$ for all $j \in \mathbb{Z}$; throughout, $\hat n \ge q \ge 1$.
For $i = 0, \ldots, \hat n$ let $N^{\mathrm{line}}_{i}$ denote the ordinary B-spline of degree $q$ on the $q+2$ consecutive knots
\[
    (u_{i-k}, \ldots, u_{i+k}) \;\; \text{for odd } q = 2k-1,
    \qquad
    (u_{i-k}, \ldots, u_{i+k+1}) \;\; \text{for even } q = 2k,
\]
taken from the periodic extension, so that $N^{\mathrm{line}}_{i}$ is centered at $u_i$ (odd $q$) or at the span $[u_i, u_{i+1}]$ (even $q$).
The \emph{periodic basis} consists of the periodizations
\begin{equation}
    \label{eq:per-basis}
    N^{\mathrm{per}}_{i,q,{\bf{D}}}(u) \;=\; \sum_{\ell \in \mathbb{Z}} N^{\mathrm{line}}_{i}(u + \ell),
    \qquad u \in [0, 1),\;\; i = 0, \ldots, \hat n ,
\end{equation}
which satisfy the partition of unity $\sum_i N^{\mathrm{per}}_{i,q,{\bf{D}}} \equiv 1$ and are $C^{q-1}$ across the seam.
Since $\hat n \ge q$, at most two terms of~\eqref{eq:per-basis} are nonzero at any $u$.
The basis coincides with the folded basis of~\cite{tang2023control} up to a cyclic index shift of $\lfloor q/2 \rfloor$; the shift is chosen so that $N^{\mathrm{per}}_{i}$ is associated with the knot $u_i$.

A section is the periodic B-spline curve on the domain knots ${\bf{D}}$ with control-point matrix ${\bf{P}} \in \mathbb{R}^{(\hat n + 1) \times 3}$:
\[
    {\bf{C}}_{\hat n}(u) \;=\; \sum_{i=0}^{\hat n} N^{\mathrm{per}}_{i,q,{\bf{D}}}(u)\, {\bf{P}}_i ,
    \qquad 0 \le u < 1 .
\]

Given parameters ${\bf{T}} = \{0 \le t_0 < \cdots < t_n < 1\}$, the \emph{periodic} collocation matrix is defined by
\begin{equation}
    \label{eq:collocation}
    {\bf{N}}({\bf{D}}, {\bf{T}})_{j,i} \;=\; N^{\mathrm{per}}_{i,q,{\bf{D}}}(t_j)
    \;\in\; \mathbb{R}^{(n+1) \times (\hat n + 1)} ,
\end{equation}
square when $\hat n = n$ and rectangular (under-determined) when $\hat n > n$.
Interpolating the data points ${\bf{Q}}_j$ at the parameters $t_j$, i.e. requiring ${\bf{C}}_{\hat n}(t_j) = {\bf{Q}}_j$, is the linear system
\begin{equation}
    \label{eq:interp-system}
    {\bf{N}}({\bf{D}}, {\bf{T}})\, {\bf{P}} = {\bf{Q}} .
\end{equation}

The solvability claims of the earlier paper rest on the following condition (Theorems~2--3 of~\cite{tang2023control}, 
stated here in the centered indexing; knot indices outside $\{0, \ldots, \hat n + 1\}$ refer to the periodic extension).

\begin{quote}
    \hrule
    \emph{Periodic window condition.} With $k = \lceil q/2 \rceil$:
    \begin{equation}
        \label{eq:window}
        u_{j-k} < t_j < u_{j+k} \;\; (q = 2k - 1 \text{ odd}),
        \quad
        u_{j-k} < t_j < u_{j+k+1} \;\; (q = 2k \text{ even}),
        \quad j = 0, \ldots, n . \quad
    \end{equation}
    With the centered indexing above and the knots and parameters taken on the same cyclic lift, 
    this is equivalent to positivity of every diagonal entry of the square collocation matrix: $N^{\mathrm{per}}_{j,q,{\bf{D}}}(t_j) > 0$.
    \vspace{1mm}
    \hrule
\end{quote}

This is the natural periodic analogue of the classical Schoenberg--Whitney condition, which characterizes nonsingularity for B-splines on a line~\cite{de1978practical}.
The earlier paper stated that the window condition implies nonsingularity of the square periodic matrix, 
and conjectured that the rectangular matrix~\eqref{eq:collocation} has full row rank whenever a knot subsequence satisfies the condition.
Section~\ref{sec:theory} separates the valid odd-cardinality case from the even-cardinality failure and develops a square-subsystem condition for the rectangular system.

\section{Solvability of periodic collocation}
\label{sec:theory}

\subsection{Singular configurations that satisfy the window condition}
\label{subsec:insufficiency}

Consider first degree $q = 1$ on uniform knots $u_j = j/(n+1)$, collocated at the interval midpoints $t_j = (j + \tfrac12)/(n+1)$.
Every row of the collocation matrix is $(\ldots, 0, \tfrac12, \tfrac12, 0, \ldots)$ and every diagonal entry equals $\tfrac12 > 0$, so the window condition holds.
If $n+1$ is even, the alternating sign pattern $(1, -1, \ldots, 1, -1)$ is consistent around the circle, and every row of the matrix annihilates it: the matrix is singular.
If $n+1$ is odd, the alternation cannot close up, and the matrix is nonsingular.
The same mechanism operates at other degrees.

\begin{proposition}[Singular configurations]
    \label{prop:singular}
    We give two counterexamples, one on uniform knots and one on non-uniform knots.

    \noindent \emph{(i) Uniform.} For $n+1=12$, the uniform knots $u_j = j/(n+1)$ and the collocation parameters
    \[
        t_j = \frac{j+1/2}{n+1} \;\; (q = 3),
        \qquad
        t_j = \frac{j}{n+1} \;\; (q = 2),
        \qquad j = 0, \ldots, n,
    \]
    satisfy the periodic window condition \eqref{eq:window}, yet the collocation matrix ${\bf{N}}({\bf{D}},{\bf{T}})$ is singular.

    \noindent \emph{(ii) Non-uniform.} Let $q = 3$, $n + 1 = 12$, and let ${\bf{D}}$ be the rational domain knots with gap proportions
    \[
        (87,\, 45,\, 71,\, 52,\, 63,\, 90,\, 41,\, 77,\, 58,\, 66,\, 83,\, 49)/782 ,
    \]
    a non-uniform knot sequence whose largest-to-smallest gap ratio is about $2.2$.
    For $s \in [0,1]$ define $t_j(s) = u_j + s\,(u_{j+1} - u_j)$.
    Then $({\bf{D}}, {\bf{T}}(s))$ satisfies the window condition for every $s \in [\tfrac12, \tfrac{257}{512}]$, and
    \[
        \det {\bf{N}}\bigl({\bf{D}}, {\bf{T}}(\tfrac12)\bigr) \;<\; 0 \;<\; \det {\bf{N}}\bigl({\bf{D}}, {\bf{T}}(\tfrac{257}{512})\bigr),
    \]
    both signs evaluated in exact rational arithmetic.
    Consequently there exists $s^* \in (\tfrac12, \tfrac{257}{512})$ at which ${\bf{N}}({\bf{D}}, {\bf{T}}(s^*))$ is singular while the window condition holds.
\end{proposition}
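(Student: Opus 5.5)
The plan is to treat the two parts separately: part (i) is a structural computation with circulant matrices, and part (ii) is an exact, computer-verified sign computation followed by the intermediate value theorem.

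\emph{Part (i).} On uniform knots with uniformly shifted parameters $t_j = u_j + \theta h$, where $h = 1/(n+1)$, the entry ${\bf N}_{j,i}$ depends only on $i-j \bmod (n+1)$. So ${\bf N}({\bf D},{\bf T})$ is circulant. Its eigenvectors are the discrete Fourier vectors, and the eigenvalue for the alternating vector $v_i = (-1)^i$ is $\sum_r (-1)^r c_r$, where $c_r$ are the entries of a row. This vector closes up around the circle exactly when $n+1$ is even, which is why we take $n+1 = 12$. It then suffices to compute one row in each case and check that the alternating sum vanishes.

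\begin{itemize}
\item For $q=3$, the basis function $N^{\mathrm{per}}_i$ is the uniform cubic B-spline centred at $u_i$ with support $(u_{i-2},u_{i+2})$. Its values at offsets $\pm h/2$ are $23/48$, and at offsets $\pm 3h/2$ they are $1/48$. The parameter $t_j = u_j + h/2$ therefore gives the nonzero row entries
\[
{\bf N}_{j,j-1}=\tfrac1{48},\quad {\bf N}_{j,j}=\tfrac{23}{48},\quad {\bf N}_{j,j+1}=\tfrac{23}{48},\quad {\bf N}_{j,j+2}=\tfrac1{48}.
\]
The alternating sum is $-\tfrac1{48}+\tfrac{23}{48}-\tfrac{23}{48}+\tfrac1{48}=0$.
\item For $q=2$, the basis function $N^{\mathrm{per}}_i$ lives on the knots $u_{i-1},\dots,u_{i+2}$. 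At $t_j = u_j$ only $N^{\mathrm{per}}_{j-1}$ and $N^{\mathrm{per}}_j$ are nonzero, each equal to $\tfrac12$, so the alternating sum is again $0$.
\end{itemize}

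The window condition \eqref{eq:window} is immediate in both cases. For $q=3$ we have $u_{j-2} < u_j + h/2 < u_{j+2}$, and for $q=2$ we have $u_{j-1} < u_j < u_{j+2}$. Equivalently, the diagonal entries are $23/48$ and $1/2$, both positive.

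\emph{Part (ii).} For every $s \in (0,1)$ the parameter $t_j(s)$ lies in $(u_j,u_{j+1}) \subset (u_{j-2},u_{j+2})$, so the window condition holds on all of $[\tfrac12,\tfrac{257}{512}]$. Next, as $s$ varies in this interval, each $t_j(s)$ stays inside the single span $(u_j,u_{j+1})$. On that span every basis function is one fixed cubic polynomial, and $t_j(s)$ is affine in $s$. Hence every entry of ${\bf N}$ is a polynomial in $s$, and so is $\det {\bf N}({\bf D},{\bf T}(s))$; in particular it is continuous in $s$.

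The main obstacle is certifying the two determinant signs. I would do this in exact rational arithmetic as follows.
\begin{itemize}
\item Evaluate the four nonzero entries of each row, at columns $j-1,\dots,j+2$, by the de Boor--Cox recurrence on the periodically extended knots with denominator $782$.
\item Form the $12\times 12$ cyclically banded matrix at $s=\tfrac12$ and at $s=\tfrac{257}{512}$.
\item Compute each determinant with fraction-exact Gaussian elimination, or via a block or cyclic-reduction formula, and record its sign.
\end{itemize}
Because the arithmetic is exact, rounding cannot affect the recorded signs. Finally, the intermediate value theorem applied to the continuous function $s \mapsto \det {\bf N}$ gives a point $s^* \in (\tfrac12,\tfrac{257}{512})$ where the determinant vanishes, while the window condition still holds there.
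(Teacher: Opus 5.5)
Your proposal is correct and follows the paper's route. For (i) the paper only says the check is straightforward, and your circulant computation is exactly that check: rows $\tfrac1{48},\tfrac{23}{48},\tfrac{23}{48},\tfrac1{48}$ for $q=3$ and $\tfrac12,\tfrac12$ for $q=2$, both annihilating the alternating vector when $n+1=12$. For (ii) both arguments rest on the exact rational determinant signs followed by the intermediate value theorem.

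The one small difference is how the window condition in (ii) is obtained. You get it for all $s\in(0,1)$ directly from $t_j(s)\in(u_j,u_{j+1})\subset(u_{j-2},u_{j+2})$. The paper instead verifies the endpoints exactly and uses that the strict window inequalities are linear in ${\bf T}$ while ${\bf T}(s)$ is affine in $s$. Your version is slightly simpler and needs no endpoint computation.
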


\begin{proof}
    For (i), the validation is straightforward.
    For (ii) the two determinant signs were computed in exact rational arithmetic by running the Cox--de~Boor recursion in a computer algebra system (Wolfram Mathematica\textsuperscript{\textregistered}); the same system was used to check the endpoint window inequalities exactly.
    For fixed ${\bf{D}}$ the window condition consists of strict inequalities that are linear in ${\bf{T}}$, and ${\bf{T}}(s)$ is affine in $s$; both endpoints satisfy the inequalities, hence so does every intermediate $s$.
    The determinant is continuous in $s$, so it vanishes somewhere in between.
\end{proof}

Proposition~\ref{prop:singular} shows that Theorems~2 and~3 of~\cite{tang2023control} do not hold as stated, and that Proposition~1 there (conjectured by Tang) needs a stronger hypothesis.
Theorem~\ref{thm:full-rank-lift} below gives the corresponding corrected statement.
Proposition~\ref{prop:singular} refutes the assertion of Meir and Sharma~\cite{meir1968convergence} at every even point count; it also shows that the failure persists on non-uniform knots.

\begin{remark}[Scope of the failure]
    \label{rem:diagnosis}
    The proof in~\cite{tang2023control} transfers the Schoenberg--Whitney equivalence from B-splines on a line to the periodic matrix.
    The transfer fails because periodization~\eqref{eq:per-basis} sums translates of B-splines, and the resulting functions no longer have the ordered support structure on which the classical zero-counting argument rests.
    The underlying obstruction is a parity constraint: a Haar space of continuous functions on the circle must have odd dimension, so every even-dimensional periodic spline space admits some singular node configuration.
    Proposition~\ref{prop:singular} locates such configurations inside the proposed window region.
    This disproves the window condition at even cardinality.
\end{remark}

For odd cardinality the situation is different and the periodic window condition is sufficient. We record this formally in the following theorem.

\begin{theorem}[Odd-cardinality window condition]
    \label{thm:odd-window}
    Let $\hat n=n\ge q$ and suppose that $n+1$ is odd.
    If $({\bf D},{\bf T})$ satisfies the periodic window condition \cite{tang2023control}, then ${\bf N}({\bf D},{\bf T})$ is nonsingular.
\end{theorem}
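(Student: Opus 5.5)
We argue by contradiction through zero counting on the circle. Suppose ${\bf N}({\bf D},{\bf T})$ is singular. Then there is a nonzero coefficient vector ${\bf c}\in\mathbb{R}^{n+1}$ with ${\bf N}{\bf c}=0$. Equivalently, the periodic spline $s=\sum_{i=0}^{n} c_i N^{\mathrm{per}}_{i,q,{\bf D}}$ is not identically zero, lies in the space of dimension $n+1$, and vanishes at the $n+1$ distinct points $t_0<\cdots<t_n$ of the circle. The aim is to show this is impossible when $n+1$ is odd and the window condition holds.

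\emph{Step 1: sign changes.} The key tool is the periodic analogue of the Schoenberg--Whitney zero bound. For a nonzero periodic spline, the number of zeros counted with the usual spline conventions (isolated zeros, with zero intervals counted as blocks) is at most the number of cyclic sign changes of its coefficient sequence. The bound can be obtained from the cyclic variation-diminishing property of periodic B-splines, via knot insertion or the periodic total positivity of the collocation kernel as in Schumaker's treatment and in~\cite{zeilfelder1996periodic,davydov2000weak}. The number of cyclic sign changes $S^{\circ}({\bf c})$ is always even. It is at most $n+1$, and equality would force strict alternation around a cycle of length $n+1$. Since $n+1$ is odd, strict alternation cannot close up, exactly as in the $q=1$ example preceding Proposition~\ref{prop:singular}. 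Hence $S^{\circ}({\bf c})\le n$.

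\emph{Step 2: localization.} The window condition is needed to rule out zeros that fall inside zero intervals of $s$ or pile up in the wrong places. If $s$ vanishes on a whole knot interval, cut the circle at such an interval. The problem then reduces to an ordinary (non-periodic) B-spline collocation problem on a line segment, where the window inequalities \eqref{eq:window} become the classical Schoenberg--Whitney conditions. For the subfamily of basis functions supported in the remaining arc, the classical theorem~\cite{de1978practical} forces the corresponding coefficients to vanish, and iterating gives ${\bf c}=0$. Otherwise $s$ has only isolated zeros. Then $n+1\le$ the number of zeros $\le S^{\circ}({\bf c})\le n$, which is a contradiction.

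\emph{Main obstacle.} The delicate point is the periodic zero bound in Step 1 together with the correct handling of the window condition in Step 2. Specifically, one must show that the window condition $N^{\mathrm{per}}_{j}(t_j)>0$ ensures that each $t_j$ is a genuine, countable zero matched to the coefficient $c_j$, so that the zeros interlace with the coefficient indices. My first attempt would follow the Haar/weak-Chebyshev route. Odd-dimensional periodic spline spaces are weak Chebyshev~\cite{schumaker2007spline,davydov2000weak}, and their interpolation sets are characterized by the periodic Schoenberg--Whitney condition. It would then remain to check that the centered-index window condition \eqref{eq:window} is equivalent to that characterization: for every cyclic arc, the number of $t_j$ it contains should match the number of basis functions supported in it, with the right strict inequalities. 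I expect this bookkeeping to be the bulk of the proof.
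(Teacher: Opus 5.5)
Your Steps 1--2 take a genuinely different route from the paper: direct zero counting with a case split. The split itself is sound, but Step 1 is false as stated. A nonzero periodic spline can have more zeros than its coefficient vector has cyclic sign changes. For $q=1$, $n+1=5$ and ${\bf c}=(1,0,1,0,1)$, $s$ is the periodic piecewise-linear function with $s(u_i)=c_i$. It vanishes on no interval and has the two touching zeros $u_1,u_3$, while ${\bf c}$ has no sign change at all. (For $q=2$, ${\bf c}=(1,1,1,0,0)$ produces a touching zero at $u_4$ in the same way.) The cyclic variation-diminishing property cannot rescue the claim, because it bounds sign changes of $s$, and touching zeros are invisible to it.

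What Case B actually needs is the dimension-based zero bound for periodic splines \cite{schumaker2007spline}. A nonzero element of $\mathbb S_{q,{\bf D}}$ that vanishes on no interval has at most $n+1$ zeros counted with multiplicity (Rolle's theorem down to the piecewise-constant $q$-th derivative). On the circle this count is even, hence at most $n$ when $n+1$ is odd. With that replacement, the $n+1$ distinct zeros $t_j$ give the contradiction without any use of the window condition. So the ``interlacing of zeros with coefficient indices'' you single out as the main obstacle is not needed there.

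Case A should also be stated precisely:
\begin{itemize}
\item $s$ then vanishes on a whole knot interval $I$.
\item Local linear independence kills the $q+1$ coefficients of the functions active on $I$ (all of them if $n=q$).
\item The remaining $n-q$ functions are ordinary B-splines on the lifted knots of the complementary arc. For these, the window condition is exactly the classical Schoenberg--Whitney condition at the increasing points $t_j$ in that arc.
\item A single application of the classical theorem gives ${\bf c}=0$; no iteration is needed.
\end{itemize}

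The paper avoids zero counting altogether. It cuts the circle at a point that is neither a knot nor a parameter. It observes that the resulting odd-dimensional periodic spline space is weak Chebyshev, with a locally linearly independent basis without common zeros (partition of unity). It then applies the support form of the Schoenberg--Whitney theorem for such spaces \cite[Theorem~3.11]{davydov2000weak}: $n+1$ nodes that can be bijectively assigned to interiors of basis supports form an interpolation set. The window condition is literally the identity assignment $t_j\in\operatorname{int}\operatorname{supp}N^{\mathrm{per}}_{j,q,{\bf D}}$, so the cyclic-arc counting you expected in your weak-Chebyshev fallback never arises.

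The paper's proof is shorter but rests on a heavy external characterization. Your corrected argument uses only the classical Schoenberg--Whitney theorem on a line and the periodic zero bound. It also shows transparently where each hypothesis enters: the window condition is needed only when $s$ has a zero interval, and parity is needed only to exclude $n+1$ isolated zeros. The latter is exactly the situation realized by the singular configurations of Proposition~\ref{prop:singular} at even cardinality.
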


\begin{proof}
    Cut the parameter circle at a point congruent to neither a knot nor a parameter, and lift one period to an open interval containing the parameters.
    The resulting $(n+1)$-dimensional periodic spline space is weak Chebyshev when $n+1$ is odd~\cite{schumaker2007spline,zeilfelder1996periodic}.
    Its periodic B-spline basis is locally linearly independent, and it has no common zero because the basis is a partition of unity.
    The support form of the Schoenberg--Whitney theorem for such spaces states that a set of $n+1$ nodes is an interpolation set if the nodes can be bijectively assigned to the interiors of the basis supports~\cite[Theorem~3.11]{davydov2000weak}.
    The window condition gives this assignment directly: $t_j\in\operatorname{int}\operatorname{supp}N^{\mathrm{per}}_{j,q,{\bf D}}$ for every $j$.
    Hence the evaluation map is bijective and ${\bf N}({\bf D},{\bf T})$ is nonsingular.
\end{proof}

\subsection{Anchor knots and perturbation intervals}
\label{subsec:sufficient}

For closed B-spline interpolation the knots cannot simply be averaged from the parameters, because the collocation matrix carries additional wrap-around constraints. 
Park and Kim~\cite{park1996smooth} gave a knot-placement rule that follows the parameter distribution (called the \emph{average} method) but is reliable only for odd degrees; 
Park's \emph{natural} and \emph{shifting} methods~\cite{park2001choosing} instead set the knots to the parameters or to the midpoints of adjacent parameter spans. 
The \emph{natural} method applies to odd degree and the \emph{shifting} method to even degree. 
Each of these placements determines the knot vector uniquely and produces a smooth interpolation curve that follows the polyline of the corresponding data points.
We summarize all the related methods for knot determination as follows:
\begin{equation}
    u_j =
    \begin{cases}
        d_n / 2 + (t_{j-1} + t_{j}) / 2 = (\tilde t_{j-1} + \tilde t_{j}) / 2 & {\text{shifting \ }} (j = 1, \cdots, n)  \\
        t_{j}                            & {\text{natural}}                      \\
        (t_{j-1} + t_{j} + t_{j+1}) / 3  & {\text{average}}                    \\
    \end{cases}
\end{equation}
For even $q$ they are collocated at the \emph{shifted} parameters $\tilde t_j = t_j + d_n/2$, and for odd $q$ they are collocated at the parameters themselves. 
All of these knot-placement strategies are independent of the parity of $n+1$.

We treat such a placement as an \emph{anchor} configuration, and the anchor knots may themselves be slightly perturbed, 
giving an interpolation curve close to that of the ideal anchors. The same idea underlies 
the open-curve and surface reduction methods of Piegl and Tiller~\cite{piegl2000reducing}, Park~\cite{park2003lofted}, and Wang~\etal\cite{wang2008reducing}. 
In order to obtain the interpolation curve, a perturbed anchor must still be solvable.
For open-curve interpolation, solvability is guaranteed by the Schoenberg--Whitney condition~\cite{de1978practical}. 
Figures~\ref{fig:cubic-anchor-perturbation} and~\ref{fig:quadratic-anchor-perturbation} illustrate
representative anchor-knot perturbations for natural interpolation of odd degree and shifting interpolation of even degree,
using two distinct non-uniform eight-point samples of the unit circle.
In each figure, the interpolants, independent control points, and domain knots are separated from
left to right, with the knot displacements magnified tenfold.
In both examples, the perturbed systems remain nonsingular and the resulting interpolants remain
close to the anchor curves.
Both figures use $n+1=8$, an even point count at which the window condition may fail; yet the
perturbations, taken within the window, do not produce a singular system, illustrating that the
singular configurations of Proposition~\ref{prop:singular} are exceptional.

\begin{figure}[!t]
    \centering
    \includegraphics[width=0.98\textwidth]{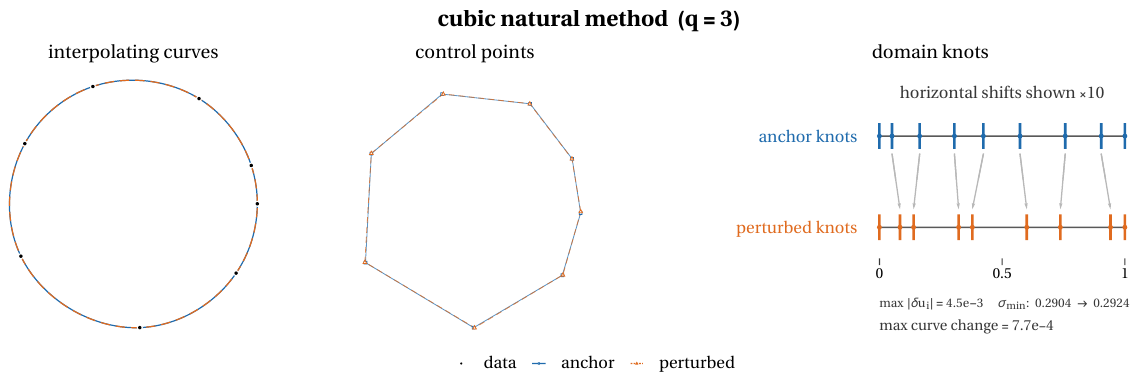}
    \caption{Cubic natural interpolation under anchor-knot perturbation ($n+1=8$ even, $q=3$).}
    \label{fig:cubic-anchor-perturbation}
\end{figure}

\begin{figure}[!t]
    \centering
    \includegraphics[width=0.98\textwidth]{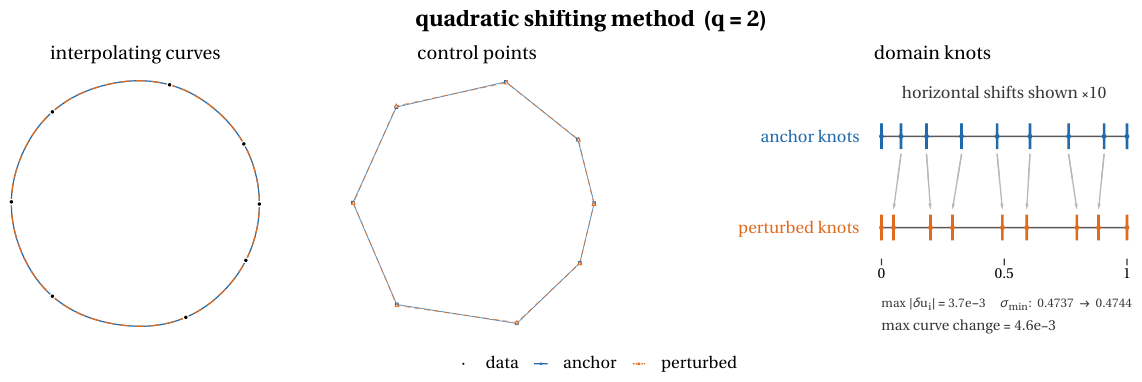}
    \caption{Quadratic shifting interpolation under anchor-knot perturbation ($n+1=8$ even, $q=2$).}
    \label{fig:quadratic-anchor-perturbation}
\end{figure}

Tang \cite{tang2023control} extended the open-curve reduction scheme of Piegl and Tiller~\cite{piegl2000reducing} to closed contours and, 
in place of the open-curve Schoenberg--Whitney test, defined a \emph{perturbation interval} for each knot of a closed section.
Figure~\ref{fig:anchor-window} illustrates the construction: the anchor knot $u_j^{q}$ and the two adjacent bound values $x_{j-1}$ and $x_j$ delimit a half-open window from which the $j$-th knot is selected, 
with the window factor $\per$ scaling the interval between the anchor and the bound values.
In the odd case, the anchors coincide with the parameters and the bounds are the parameter midpoints; 
in the even case, the anchors are the midpoints of the shifted parameters and the bounds coincide with the shifted parameters themselves.

\begin{figure}[htbp]
    \centering
    \includegraphics[width=0.7\textwidth]{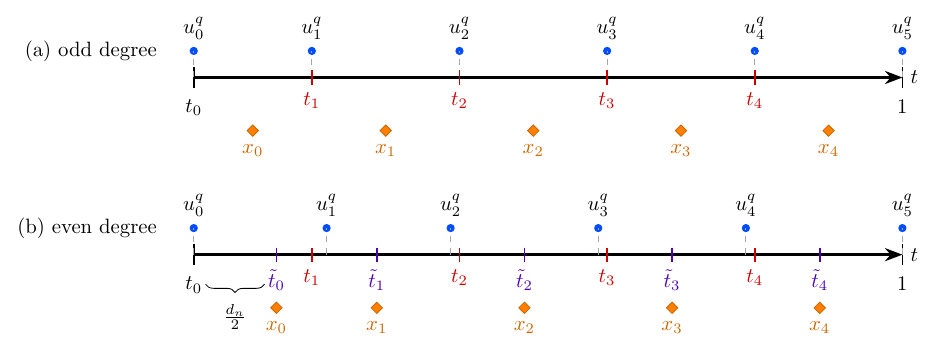}
    \caption{Anchor knots and bound values defining the perturbation interval $[x_{j-1}, x_j)$ of a closed section: (a) odd degree, with anchors $u_j^{q}=t_j$ and bounds $x_j=(t_j+t_{j+1})/2$; (b) even degree, with anchors $u_j^{q}=(\tilde t_{j-1}+\tilde t_j)/2$ and bounds $x_j=\tilde t_j$.}
    \label{fig:anchor-window}
\end{figure}

\subsection{Full row rank lifting}
\label{subsec:full-rank-lift}

The rectangular case, in which one section is interpolated on the common knots of all sections, reduces to the square case by nestedness of periodic spline spaces.
Write ${\bf{b}}_{{\bf{D}}}(u)$ for the column vector
$\bigl(N^{\mathrm{per}}_{i,q,{\bf{D}}}(u)\bigr)_{i=0}^{\hat n}$
and $\mathbb{S}_{q,{\bf{D}}} = \operatorname{span}\{N^{\mathrm{per}}_{i,q,{\bf{D}}} \mid i = 0, \dots, \hat{n}\}$;
for a subsequence $\bar{{\bf{D}}}$ of ${\bf{D}}$, the barred symbols $\bar{{\bf{b}}}(u)$ and $\mathbb{S}_{q,\bar{{\bf{D}}}}$
denote the corresponding basis vector and spline space built on $\bar{{\bf{D}}}$.
The following lemma records the nesting property on which this reduction rests.

\begin{lemma}[Nested periodic spline spaces]
    \label{lem:nested}
    Let $\hat n \ge q \ge 1$.
    The functions $\{N^{\mathrm{per}}_{i,q,{\bf{D}}}\}_{i=0}^{\hat n}$ are locally linearly independent on the circle and 
    span the space of $1$-periodic $C^{q-1}$ piecewise polynomials of degree $\le q$ with breakpoints in the periodic extension of ${\bf{D}}$, of dimension $\hat n + 1$~\cite[\S 8.1]{schumaker2007spline}.
    Moreover, if $\bar{{\bf{D}}} \subset {\bf{D}}$ is a subsequence containing $0$ and $1$, of size $n + 2$ with $q \le n < \hat n$, 
    then $\mathbb{S}_{q,\bar{{\bf{D}}}} \subset \mathbb{S}_{q,{\bf{D}}}$ and there is ${\bf{M}} \in \mathbb{R}^{(n+1) \times (\hat{n}+1)}$ with $\bar{{\bf{b}}}(u) = {\bf{M}}\,{\bf{b}}_{{\bf{D}}}(u)$ on $[0,1)$ and $\operatorname{rank} {\bf{M}} = n+1$.
\end{lemma}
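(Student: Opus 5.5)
The plan has three steps: identify the space spanned by the periodic basis, prove nestedness, and then read off ${\bf M}$ and its rank.

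\emph{Step 1: the span and its dimension.} Let $\mathcal{P}_{\bf D}$ be the space of $1$-periodic $C^{q-1}$ piecewise polynomials of degree $\le q$ with breakpoints in the periodic extension of ${\bf D}$. Each $N^{\mathrm{per}}_{i,q,{\bf D}}$ lies in $\mathcal{P}_{\bf D}$. Indeed, $N^{\mathrm{line}}_i$ is a $C^{q-1}$ spline with simple knots from the extended sequence, and periodization~\eqref{eq:per-basis} preserves both smoothness and the breakpoint set modulo $1$. For the dimension I would count on one period. There are $\hat n+1$ polynomial pieces, giving $(\hat n+1)(q+1)$ parameters. There are $\hat n+1$ breakpoints on the circle, each imposing $q$ continuity conditions, and these conditions are independent. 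This gives $\dim\mathcal{P}_{\bf D}=\hat n+1$. Equivalently, via the truncated-power representation, a periodic spline is determined by its jump in the $q$-th derivative at each of the $\hat n+1$ breakpoints plus one polynomial piece, subject to $q$ closing conditions.

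\emph{Step 2: local linear independence, and hence spanning.} Fix a span $I=(u_r,u_{r+1})$. Because $\hat n\ge q$, the functions nonzero on $I$ are the periodizations of exactly $q+1$ distinct line B-splines, and no index is repeated mod $\hat n+1$. The restrictions to $I$ of these $q+1$ line B-splines are linearly independent and form a basis of the polynomials of degree $\le q$; this is the classical local independence of B-splines on a line. So a combination vanishing on $I$ has all $q+1$ coefficients zero. Since each basis function is nonzero on some span, a combination that vanishes identically has all coefficients zero. Hence the $\hat n+1$ functions are linearly independent in $\mathcal{P}_{\bf D}$, and by Step 1 they span it. This is exactly the cited result~\cite[\S 8.1]{schumaker2007spline}, so I would mostly invoke it.

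\emph{Step 3: nestedness.} Take $\bar{\bf D}\subset{\bf D}$ containing $0$ and $1$. Its periodic extension is a subset of that of ${\bf D}$. A $C^{q-1}$ periodic piecewise polynomial with breakpoints in the coarser set is therefore one with breakpoints in the finer set, so $\mathcal{P}_{\bar{\bf D}}\subset\mathcal{P}_{\bf D}$, i.e.\ $\mathbb{S}_{q,\bar{\bf D}}\subset\mathbb{S}_{q,{\bf D}}$. The hypothesis $n\ge q$ is needed so that Steps 1 and 2 also apply to $\bar{\bf D}$.

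\emph{Step 4: the matrix ${\bf M}$ and its rank.} Expanding each $\bar N_i$ in the basis ${\bf b}_{\bf D}$ yields a unique row ${\bf M}_{i,\cdot}$, so that $\bar{\bf b}={\bf M}{\bf b}_{\bf D}$ on $[0,1)$. For the rank, suppose ${\bf c}^{\mathsf T}{\bf M}=0$. Then ${\bf c}^{\mathsf T}\bar{\bf b}={\bf c}^{\mathsf T}{\bf M}\,{\bf b}_{\bf D}\equiv0$, and linear independence of $\bar{\bf b}$ (Step 2 applied to $\bar{\bf D}$) forces ${\bf c}=0$. Thus ${\bf M}$ has full row rank $n+1$.

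The main obstacle is Step 1: showing that the $q(\hat n+1)$ periodic continuity conditions are independent, equivalently that the periodic space has dimension exactly $\hat n+1$ and not more. I would avoid a direct count by combining an upper bound with the lower bound from independence. The upper bound comes from the map $s\mapsto$ (jumps of $s^{(q)}$ at the $\hat n+1$ breakpoints, together with the restriction of $s$ to one piece). The jumps sum to zero, because $s^{(q)}$ is piecewise constant and periodic, which leaves $\hat n$ free jumps. Once the jumps are fixed, $s$ is determined up to an additive constant, which gives one more degree of freedom. Hence $\dim\le\hat n+1$, and Step 2 supplies the matching lower bound.
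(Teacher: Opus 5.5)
Your proof is correct. Steps 3--4 (nesting, and the rank of ${\bf M}$ via ${\bf c}^{\mathsf T}{\bf M}={\bf 0}\Rightarrow{\bf c}^{\mathsf T}\bar{\bf b}\equiv0$) coincide with the paper's. You differ in how you obtain spanning.

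The paper cites $\dim\mathcal{P}=\hat n+1$ from Schumaker but in fact proves spanning directly. It lifts to the bi-infinite periodic knot sequence and writes any $f\in\mathcal{P}$ in its unique Curry--Schoenberg expansion $\sum_j a_jN^\infty_j$. Periodicity plus uniqueness then force $a_{j+\hat n+1}=a_j$, and regrouping by residue class finishes. Its local independence is likewise read off from local independence of the line B-splines on the lift.

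You prove independence span by span, using $\hat n\ge q$ to see that the $q+1$ active line B-splines fall in distinct residue classes, so each active $N^{\mathrm{per}}_i$ equals a single line B-spline there. You then get spanning from a dimension count. The $q$-th-derivative jump map has image in the zero-sum hyperplane, of dimension $\hat n$. Its kernel is the constants: zero jumps make $s$ a $C^q$, hence global, polynomial, and a periodic polynomial is constant. So $\dim\mathcal{P}_{\bf D}\le\hat n+1$, and independence gives equality. You should state that kernel justification explicitly; it is easy but it is the crux of the upper bound.

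Your route is more elementary and self-contained. It needs only that the B-splines active on one span form a basis of the degree-$q$ polynomials, and it proves rather than cites the dimension formula. The paper's route relies on the Curry--Schoenberg theorem for bi-infinite knot sequences. In exchange it exhibits the periodic coefficient sequence of every periodic spline directly, which makes the dimension citation redundant.

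One small slip: your truncated-power aside (``$\hat n+1$ jumps plus one piece, subject to $q$ closing conditions'') gives $\hat n+2$, one too many. The vanishing sum of the jumps is an extra closing condition. The jump-map argument in your last paragraph, which is the one you actually rely on, handles this correctly. The ``restriction of $s$ to one piece'' in that map is not needed for the bound.
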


The proof is given in Appendix~\ref{app:proofs}.

The nestedness recorded in Lemma~\ref{lem:nested} is what the energy-minimizing lofting of Section~\ref{sec:certified} ultimately relies on.
There each section is interpolated on the common knots shared by all sections, so the collocation system~\eqref{eq:interp-system} is under-determined ($\hat n > n$); 
full row rank of ${\bf{N}}$ is then exactly the property that guarantees feasibility for closed section data and a unique energy-minimizing interpolant.
The following theorem shows that this property can be certified from a single nonsingular square subsystem.

\begin{theorem}[Full row rank lifting]
    \label{thm:full-rank-lift}
    Let $\hat{n} > n \ge q$, let ${\bf{D}}$ be domain knots of size $\hat n+2$ and ${\bf{T}}$ parameters of size $n+1$.
    If some subsequence $\bar{{\bf{D}}} = \{0 = \bar u_0 < \cdots < \bar u_{n+1} = 1\} \subset {\bf{D}}$ has nonsingular square collocation matrix $\bar{{\bf{N}}} = {\bf{N}}(\bar{{\bf{D}}}, {\bf{T}})$, then the rectangular ${\bf{N}}({\bf{D}}, {\bf{T}})$ has full row rank.
\end{theorem}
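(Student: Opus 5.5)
The argument is a factorization through Lemma~\ref{lem:nested}, followed by a rank comparison.

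\emph{Step 1: factor the square matrix.} Because $\bar{{\bf{D}}} \subset {\bf{D}}$ contains $0$ and $1$, has size $n+2$, and satisfies $q \le n < \hat n$, Lemma~\ref{lem:nested} applies. It supplies ${\bf{M}} \in \mathbb{R}^{(n+1)\times(\hat n+1)}$ with $\bar{{\bf{b}}}(u) = {\bf{M}}\,{\bf{b}}_{{\bf{D}}}(u)$ for every $u \in [0,1)$.

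The $j$-th row of $\bar{{\bf{N}}} = {\bf{N}}(\bar{{\bf{D}}},{\bf{T}})$ is $\bar{{\bf{b}}}(t_j)^{\top}$, and the $j$-th row of ${\bf{N}}({\bf{D}},{\bf{T}})$ is ${\bf{b}}_{{\bf{D}}}(t_j)^{\top}$. Evaluating the identity at each $t_j \in [0,1)$ and stacking the transposed rows gives
\[
\bar{{\bf{N}}} \;=\; {\bf{N}}({\bf{D}},{\bf{T}})\,{\bf{M}}^{\top}.
\]

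\emph{Step 2: compare ranks.} The left side is nonsingular by hypothesis, so its rank is $n+1$. The rank of a product is at most the rank of either factor, hence
\[
n+1 \;=\; \operatorname{rank}\bar{{\bf{N}}} \;\le\; \operatorname{rank}{\bf{N}}({\bf{D}},{\bf{T}}) \;\le\; n+1,
\]
where the last inequality holds because ${\bf{N}}({\bf{D}},{\bf{T}})$ has only $n+1$ rows. Therefore ${\bf{N}}({\bf{D}},{\bf{T}})$ has full row rank. Note that this step uses only the factorization; the fact that $\operatorname{rank}{\bf{M}} = n+1$ is not needed.

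\emph{Main obstacle.} The substantive content is the nesting identity itself, which the paper defers to Lemma~\ref{lem:nested}. Two points need checking.
\begin{itemize}
\item Each periodic B-spline on $\bar{{\bf{D}}}$ is a $1$-periodic $C^{q-1}$ piecewise polynomial with breakpoints in the periodic extension of $\bar{{\bf{D}}}$. Those breakpoints lie in the periodic extension of ${\bf{D}}$, so the function belongs to $\mathbb{S}_{q,{\bf{D}}}$ by the characterization in Lemma~\ref{lem:nested}.
\item The centered indexing and cyclic shift of~\eqref{eq:per-basis} do not affect the inclusion, since membership depends only on the span of the basis, not on how its elements are labelled.
\end{itemize}

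The identity must also hold pointwise at the parameters, not merely on a dense set. This is automatic because both sides are continuous for $q \ge 2$. Even for $q=1$, equality of the periodic functions holds everywhere on the circle, so it holds at each $t_j$.
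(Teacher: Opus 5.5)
Your proof is correct and follows the paper's argument: the paper also uses Lemma~\ref{lem:nested} to write $\bar{{\bf N}} = {\bf N}{\bf M}^{\mathsf T}$ row by row, then concludes from $n+1=\operatorname{rank}\bar{{\bf N}}\le\operatorname{rank}{\bf N}\le n+1$. You are also right that $\operatorname{rank}{\bf M}=n+1$ is not needed here, and the continuity aside is unnecessary, since the lemma's identity already holds pointwise on $[0,1)$ and the $q=1$ hat functions are continuous anyway.
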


\begin{proof}
    Row $j$ of $\bar{{\bf{N}}}$ is $\bar{{\bf{b}}}(t_j)^{\mathsf{T}} = {\bf{b}}_{{\bf{D}}}(t_j)^{\mathsf{T}}{\bf{M}}^{\mathsf{T}}$ by Lemma~\ref{lem:nested}, and ${\bf{b}}_{{\bf{D}}}(t_j)^{\mathsf{T}}$ is row $j$ of ${\bf{N}}$; hence $\bar{{\bf{N}}} = {\bf{N}}{\bf{M}}^{\mathsf{T}}$ and
    \[
        n+1 = \operatorname{rank} \bar{{\bf{N}}} = \operatorname{rank}({\bf{N}}{\bf{M}}^{\mathsf{T}}) \le \operatorname{rank} {\bf{N}} \le n+1,
    \]
    so \(\operatorname{rank} {\bf{N}} = n+1\).
\end{proof}

Theorem~\ref{thm:full-rank-lift} is the corrected form of Proposition~1 of~\cite{tang2023control}.
At odd cardinality the window hypothesis is sufficient by Theorem~\ref{thm:odd-window}; 
at even cardinality Proposition~\ref{prop:singular} shows that it must be replaced by nonsingularity of $\bar{{\bf N}}$ itself.
This hypothesis is guaranteed by the anchor knots defined by natural and shifting methods~\cite{park2001choosing} or can be checked operationally by $\sigma_{\min}(\bar{{\bf N}})>\sigma_{\mathrm{tol}}$ on the square collocation matrix $\bar{{\bf{N}}}$.
The mathematical statements use the exact singular value; Section~\ref{sec:experiments} describes the floating-point safeguard used in the implementation.

The preceding results supply the safe square subsystems and rank transfer needed by the constrained energy formulation.

\section{Energy-minimizing lofting framework}
\label{sec:certified}

\subsection{Energy minimization and saddle-point solvability}
\label{subsec:energy}

For $\hat n > n$, the system~\eqref{eq:interp-system} is under-determined, and the interpolant of minimal strain energy is selected:
\begin{equation}
    \label{eq:energy}
    E({\bf{C}}_{\hat{n}}) = \int_{0}^{1} \bigl( \alpha \| \dot{{\bf{C}}}_{\hat n} \|^2 + \beta \| \ddot{{\bf{C}}}_{\hat n} \|^2 \bigr) \mathrm{d}u
    = \operatorname{tr}\bigl( {\bf{P}}^{\mathsf{T}} {\bf{K}} {\bf{P}} \bigr),
\end{equation}
where ${\bf{K}} \in \mathbb{R}^{(\hat{n}+1) \times (\hat{n}+1)}$ is the stiffness matrix defined by
\[
    {\bf{K}} = \int_{0}^{1} \bigl( \alpha \, {\bf{b}}^{(1)} {\bf{b}}^{(1)\mathsf{T}} + \beta \, {\bf{b}}^{(2)} {\bf{b}}^{(2)\mathsf{T}} \bigr) \mathrm{d}u,
\] 
and ${\bf{b}} = {\bf{b}}_{{\bf{D}}}$ is the periodic basis vector; 
$\alpha \ge 0$ and $\beta \ge 0$ are the stretching and bending coefficients, respectively~\cite{park2001choosing,park2003lofted}.
Here and below, ${\bf{N}}$ abbreviates the periodic collocation matrix ${\bf{N}}({\bf{D}}, {\bf{T}})$.
The Lagrange system for~\eqref{eq:energy} under the interpolation constraints~\eqref{eq:interp-system} is
\begin{equation}
    \label{eq:saddle}
    \begin{pmatrix}
        {\bf{K}} & {\bf{N}}^{\mathsf{T}} \\[2pt]
        {\bf{N}}  & {\bf{0}}
    \end{pmatrix}
    \begin{pmatrix}
        {\bf{P}} \\[2pt] {\bf{v}}
    \end{pmatrix}
    =
    \begin{pmatrix}
        {\bf{0}} \\[2pt] {\bf{Q}}
    \end{pmatrix}.
\end{equation}

For $\alpha > 0$, $\beta \ge 0$ (or $\alpha \ge 0$, $\beta > 0$, $q \ge 2$), 
the matrix in~\eqref{eq:saddle} is nonsingular if ${\bf{N}}$ has full row rank and yields a unique energy-minimizing interpolating curve~\cite{wang2008reducing,benzi2005numerical}.
Theorem~\ref{thm:full-rank-lift} shows how this full row rank is guaranteed: it suffices that each section retain a nonsingular square subsystem, 
whose rank transfers to the rectangular common-knot matrix.

\subsection{Knot construction and rank checks}
\label{subsec:certified-knots}

This subsection turns the rank conditions of Section~\ref{sec:theory} into a per-section knot-construction procedure.
The periodic window condition~\cite{tang2023control} certifies solvability at odd point counts (Theorem~\ref{thm:odd-window}); at even point counts it is not sufficient, 
but the counterexamples of Proposition~\ref{prop:singular} are not easily constructed and are rarely encountered in practice.
A rank check is therefore applied as a safeguard: at even point counts the selected knots are verified by a singular-value test, 
and the nonsingular anchor knots provide a safe fallback whenever the test fails.

Procedure~\ref{lst:proc1} selects, for one section with collocation parameters ${\boldsymbol\tau}=\{\tau_j\}_{j=0}^n$, a square subsystem from input domain knots ${\bf D}$ with $\hat n\ge n$.
Here $\tau_j=t_j$ for odd $q$ and $\tau_j=\tilde t_j$ for even $q$, with $\tau_{n+1}=\tau_0+1$.
The selection reuses the anchor knots and bound values of~\cite{tang2023control}.
In terms of ${\boldsymbol\tau}$, the \emph{anchor knots} ${\bf D}^{q}=\{u_0^q,\ldots,u_{n+1}^q\}$ and \emph{bound values} ${\bf X}=\{x_0,\ldots,x_n\}$ are
\begin{equation}
    \label{eq:anchors}
    u_j^{q} =
    \begin{cases}
        \tau_j, & q \text{ odd},\\[4pt]
        \dfrac{\tau_{j-1}+\tau_j}{2}, & q \text{ even},
    \end{cases}
    \qquad
    x_j =
    \begin{cases}
        \dfrac{\tau_j+\tau_{j+1}}{2}, & q \text{ odd},\\[4pt]
        \tau_j, & q \text{ even},
    \end{cases}
\end{equation}
with $u_0^{q} = 0$ and $u_{n+1}^{q} = 1$ (see Figure~\ref{fig:anchor-window}); the anchor knots are exactly the natural and shifting configurations (Section~\ref{subsec:sufficient}).
The $j$-th knot may be taken anywhere in the half-open window:
\[
    [a_j, b_j) = \bigl[\, (1-\per)\,u_j^{q} + \per\, x_{j-1},\; (1-\per)\,u_j^{q} + \per\, x_j \,\bigr) ;
\]
the first input knot in the window is taken if one exists, and the anchor $u_j^{q}$ otherwise.
The factor $\per \in [0, 1]$ scales the window widths.
At $\per=1.0$ the windows are widest and favor knot sharing; at $\per=0$ every window collapses to its anchor knot, and each selection returns the section anchors.
Figure~\ref{fig:knot-flex} illustrates this window geometry for a single knot.

\begin{figure}[htbp]
    \centering
    \includegraphics[width=0.5\textwidth]{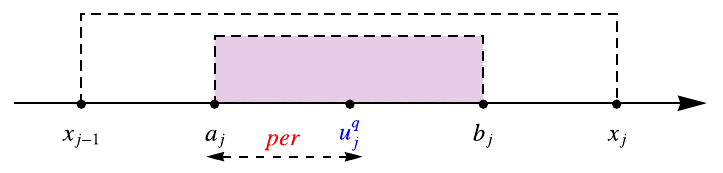}
    \caption{Window geometry of the per-section knot selection. The $j$-th knot is chosen in the half-open window $[a_j, b_j)$ between the anchor knot $u_j^{q}$ and the bound values $x_{j-1}$ and $x_j$, scaled by the window factor $\per$.}
    \label{fig:knot-flex}
\end{figure}

We call the sequence produced by this window-selection loop, before the final parity branch, the \emph{first-fit candidate}.
Because the candidate is chosen inside the selection windows, it satisfies the periodic window condition; 
at odd cardinality Theorem~\ref{thm:odd-window} therefore certifies that its square collocation matrix is nonsingular, and the candidate is returned without any further test.
At even cardinality the window condition is no longer sufficient, so the candidate is subjected to the rank test
\[
    \sigma_{\min}\bigl( {\bf{N}}(\bar{{\bf{D}}}, {\boldsymbol\tau}) \bigr) > \sigma_{\mathrm{tol}}
    \qquad (\text{we use } \sigma_{\mathrm{tol}} = 10^{-8}),
\]
a singular-value computation on a square matrix; if the test fails, the section falls back to its anchor knots.
For odd and even degrees, respectively, the fallback collocation matrices are nonsingular on given collocation parameters $\boldsymbol\tau$, 
as guaranteed by the natural and shifting methods~\cite{park2001choosing}.
Procedure~\ref{lst:proc1} therefore terminates for every input without a retry loop and returns a nonsingular square subsystem.
The pseudocode is given in Listing~\ref{lst:proc1}.

\vspace{2mm}
\begin{lstlisting}[mathescape, escapechar=\%, caption={Procedure 1: per-section knot selection with a rank check.}, label={lst:proc1}]
  Procedure 1 $\mathtt{select\_knots}({\boldsymbol\tau}, {\bf{D}}, q, \per, \sigma_{\mathrm{tol}})$
  INPUT:
  $\quad {\boldsymbol\tau} = \{\tau_0, \ldots, \tau_n\} \gets$ collocation parameters (shifted if $q$ even)
  $\quad {\bf{D}} = \{u_0, \ldots, u_{\hat{n}+1}\} \gets$ input domain knots, $\hat n\ge n$
  $\quad q,\ \per \gets$ section degree (odd or even), window factor
  $\quad \sigma_{\mathrm{tol}} \gets$ rank-test tolerance
  OUTPUT:
  $\quad \bar{{\bf{D}}} = \{\bar{u}_0, \ldots, \bar{u}_{n+1}\} \gets$ selected domain knots making square subsystem nonsingular
  ALGORITHM:
    ${\bf{D}}^{q}, {\bf{X}} \gets$ anchor knots and bound values, see %Eq.~\eqref{eq:anchors}%
    $\bar{u}_0 \gets 0$;  $\bar{u}_{n+1} \gets 1$;  $\mathrm{span} \gets 0$
    for $j = 1$ to $n$ do  // first-fit selection in the windows
      $a \gets (1 - \per) \cdot u_j^{q} + \per \cdot x_{j-1}$
      $b \gets (1 - \per) \cdot u_j^{q} + \per \cdot x_j$
      while $u_{j + \mathrm{span}} < a$ do  
        $\mathrm{span} \mathrel{+}= 1$
        if $j + \mathrm{span} > \hat{n} + 1$ then  // input knots exhausted:
          $\bar{u}_l \gets u_l^{q}$ for all remaining $l$;  exit the for-loop
        end if
      end while
      if $u_{j + \mathrm{span}} < b$ then  $\bar{u}_j \gets u_{j + \mathrm{span}}$
      else $\bar{u}_j \gets u_j^{q}$;  $\mathrm{span} \mathrel{-}= 1$
    end for
    if $n+1$ is odd then return $\bar{{\bf{D}}}$  // window condition holds; nonsingular by Theorem %\ref{thm:odd-window}%
    // n + 1 is even; check the rank of the candidate
    else if $\sigma_{\min}({\bf{N}}(\bar{{\bf{D}}}, {\boldsymbol\tau})) > \sigma_{\mathrm{tol}}$ then return $\bar{{\bf{D}}}$  // rank test passed
    else return ${\bf{D}}^{q}$  // nonsingular by anchor knots, see Section %\ref{subsec:sufficient}% 
\end{lstlisting}

Procedure~\ref{lst:proc2} builds the common domain knots as in~\cite{wang2008reducing}.
The initial knots are computed from the averaged parameters of the largest sections.
Each section then selects a square subsystem from the current knots with Procedure~\ref{lst:proc1}, and this subsequence is merged into the common set by sorted union with duplicates removed.

\vspace{2mm}
\begin{lstlisting}[mathescape, escapechar=\%, caption={Procedure 2: common domain knots with per-section square subsystems.}, label={lst:proc2}]
  Procedure 2 $\mathtt{build\_common\_knots}(\mathcal{Q}, q, \per, \sigma_{\mathrm{tol}})$
  INPUT:
  $\quad \mathcal{Q} = \{{\bf{Q}}_{i,j}\} \gets$ sections of data points ($i = 0, \ldots, m$; $j = 0, \ldots, n_i$)
  $\quad q,\ \per \gets$ section degree (odd or even), window factor
  $\quad \sigma_{\mathrm{tol}} \gets$ rank-test tolerance
  OUTPUT:
  $\quad {\bf{D}} \gets$ common domain knots
  $\quad \bar{{\bf{D}}}^{(0)}, \ldots, \bar{{\bf{D}}}^{(m)} \gets$ selected domain knots of every section
  ALGORITHM:
    $n_{\max} \gets \max_i n_i$
    ${\boldsymbol\tau}_{\mathrm{avg}} \gets$ mean collocation-parameter vector of sections with $n_i = n_{\max}$
    ${\bf{D}} \gets$ anchor knots of ${\boldsymbol\tau}_{\mathrm{avg}}$, see %Eq.~\eqref{eq:anchors}%  // initial knots ${\bf{D}}_0$
    for $i = 0$ to $m$ do
      ${\boldsymbol\tau}^{(i)} \gets$ collocation parameters of section $i$ (shifted if $q$ even)
      $\bar{{\bf{D}}}^{(i)} \gets \mathtt{select\_knots}({\boldsymbol\tau}^{(i)}, {\bf{D}}, q, \per, \sigma_{\mathrm{tol}})$
      ${\bf{D}} \gets \mathtt{merge}({\bf{D}}, \bar{{\bf{D}}}^{(i)})$  // sorted union
    end for
    return ${\bf{D}}$, $\{\bar{{\bf{D}}}^{(i)}\}_{i=0}^{m}$
\end{lstlisting}

Because Procedure~\ref{lst:proc2} retains the initial knots obtained from the averaged largest-section parameters, 
its output at $\per=0$ need not equal the traditional union when several largest sections are averaged.

\begin{proposition}[Solvability of the constructed knots]
    \label{prop:proc2}
    Let $q\ge2$, and suppose that $\alpha>0$ and $\beta\ge0$ (or $\alpha\ge0$ and $\beta>0$).
    Then Procedure~\ref{lst:proc1} returns, for every section, a square subsystem whose collocation matrix is nonsingular; Procedure~\ref{lst:proc2} terminates; 
    and on its output ${\bf{D}}$, every section's collocation matrix has full row rank and the saddle-point system~\eqref{eq:saddle} is uniquely solvable.
\end{proposition}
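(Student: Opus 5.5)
The proof will follow the structure of the statement in three parts: the per-section claim about Procedure~\ref{lst:proc1}, termination of Procedure~\ref{lst:proc2}, and full row rank together with saddle-point solvability on the final output ${\bf D}$.

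\emph{Procedure~\ref{lst:proc1}.} We split on how the for-loop exits and on the parity of $n+1$. If the input knots are exhausted, the remaining entries are anchors. The first-fit candidate still has to be checked against the periodic window condition~\eqref{eq:window}; for this we use that each selected knot lies in $[a_j,b_j)\subset[x_{j-1},x_j)$ when $\per\in[0,1]$.

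\begin{itemize}
\item \emph{Odd $q$.} We have $x_{j-1}=(\tau_{j-1}+\tau_j)/2$ and $x_j=(\tau_j+\tau_{j+1})/2$. Hence $\bar u_{j-1}<\tau_j$, since $\bar u_{j-1}<x_{j-1}<\tau_j$. Similarly $\bar u_{j+1}\ge x_j>\tau_j$. The window $u_{j-k}<t_j<u_{j+k}$ with $k\ge2$ is therefore satisfied.
\item \emph{Even $q$.} Here $\bar u_j\in[\tau_{j-1},\tau_j)$, which gives $\bar u_{j-k}\le\bar u_j\le\tau_j<\bar u_{j+1}\le\bar u_{j+k+1}$. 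Strictness must be checked at the boundary indices and across the seam, using $\bar u_0=0$, $\bar u_{n+1}=1$ and the lift $\tau_{n+1}=\tau_0+1$.
\end{itemize}

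Strict monotonicity of $\bar{\bf D}$ follows from the disjointness of consecutive windows, with anchors as the fallback. In the odd-cardinality case Theorem~\ref{thm:odd-window} then gives nonsingularity. In the even case, either the returned $\bar{\bf D}$ has $\sigma_{\min}>\sigma_{\mathrm{tol}}\ge 0$, hence is nonsingular, or ${\bf D}^q$ is returned. That fallback is nonsingular by the anchor results of Section~\ref{subsec:sufficient} (natural for odd $q$, shifting for even $q$, following~\cite{park2001choosing}).

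\emph{Procedure~\ref{lst:proc2}.} Termination is immediate: the outer loop has $m+1$ iterations, and each call to Procedure~\ref{lst:proc1} is finite. The while-loop advances \texttt{span} at most $\hat n+1$ times overall, and the parity branch is a single test without retry. The structural fact needed next is that each $\bar{\bf D}^{(i)}$ is a subsequence of the \emph{final} ${\bf D}$. This holds because ${\bf D}$ is only enlarged by sorted unions, and $\bar{\bf D}^{(i)}$ is merged into it immediately after it is computed. Every selected sequence contains $0$ and $1$, and ${\bf D}$ contains them as well.

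\emph{Rank and saddle point.} For each section $i$ with $\hat n>n_i$, Theorem~\ref{thm:full-rank-lift} applied with $\bar{\bf D}=\bar{\bf D}^{(i)}$ gives full row rank of ${\bf N}({\bf D},\boldsymbol\tau^{(i)})$. If $\hat n=n_i$, then $\bar{\bf D}^{(i)}={\bf D}$ and the matrix is square and nonsingular. This step needs $n_i\ge q$ (assumed) and the hypotheses of Lemma~\ref{lem:nested}.

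For the saddle point we invoke the standard criterion~\cite{benzi2005numerical}: if ${\bf N}$ has full row rank and ${\bf K}$ is positive definite on $\ker{\bf N}$, the matrix in~\eqref{eq:saddle} is nonsingular. To check the kernel condition, let ${\bf P}\in\ker{\bf N}$ with ${\bf P}^{\mathsf T}{\bf K}{\bf P}=0$.
\begin{itemize}
\item If $\alpha>0$, then $\dot{\bf C}\equiv0$, so ${\bf C}$ is constant.
\item If $\beta>0$ and $q\ge2$, then $\ddot{\bf C}\equiv0$, so ${\bf C}$ is affine. Periodicity forces it to be constant.
\end{itemize}
A constant curve vanishing at the parameters is zero. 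Local linear independence (Lemma~\ref{lem:nested}) then gives ${\bf P}=0$. Uniqueness of the solution follows.

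The main obstacle I expect is the first step: verifying rigorously that the first-fit candidate, including the \texttt{span} decrement and the exhaustion fallback that mixes input knots with anchors, is strictly increasing and satisfies~\eqref{eq:window} on the cyclic lift. A mixed sequence is neither a pure anchor configuration nor obviously windowed, so I would prove a loop invariant $\bar u_j\in[a_j,b_j)\cup\{u_j^q\}$. In the odd-cardinality case this is what Theorem~\ref{thm:odd-window} requires. In the even case the rank test makes the point moot.
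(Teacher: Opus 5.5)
Your proposal is correct and follows the paper's own proof. It uses the same candidate-or-anchor dichotomy in Procedure~\ref{lst:proc1} (Theorem~\ref{thm:odd-window} at odd cardinality, rank test or anchor fallback at even cardinality), the fact that each $\bar{\bf D}^{(i)}$ is a subsequence of the final ${\bf D}$, and Theorem~\ref{thm:full-rank-lift}. You additionally verify two things the paper only asserts or cites: that the first-fit candidate satisfies the window condition, and that ${\bf K}$ is positive definite on $\ker{\bf N}$. One small slip: for even $q$ the chain should read $\bar u_{j-k}\le\bar u_j<\tau_j\le\bar u_{j+1}$, since $\bar u_{j+1}=\tau_j$ can occur at $\per=1$. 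The strict upper inequality then comes from $\bar u_{j+k+1}\ge\bar u_{j+2}\ge\tau_{j+1}>\tau_j$, using $k\ge1$.
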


\begin{proof}
    Procedure~\ref{lst:proc1} returns either the first-fit candidate or the anchor knots.
    At odd cardinality the candidate is nonsingular by Theorem~\ref{thm:odd-window}; at even cardinality it is returned only if the rank test passes, 
    and otherwise the anchors are returned, which are nonsingular by the natural method for odd degree and the shifting method for even degree (Section~\ref{subsec:sufficient}).
    Procedure~\ref{lst:proc2} performs one such call per section and a finite merge, so it terminates.
    Merging only adds knots, and each section's square subsystem is merged into the evolving set; hence every subsystem is a subsequence of the final ${\bf{D}}$, 
    and its collocation matrix depends only on the subsystem and the section's parameters.
    If the final ${\bf D}$ equals a section's subsystem, its matrix is already nonsingular; if it is a strict refinement, Theorem~\ref{thm:full-rank-lift} gives full row rank.
\end{proof}

With the common knots in hand, the surface is completed as usual: system~\eqref{eq:saddle} is solved for every section on the common domain knots ${\bf{D}}$, 
and the control-point columns are interpolated longitudinally (Algorithm~A9.4 of~\cite{piegl1996nurbs}).
Because every section is interpolated on the same common knots ${\bf{D}}$, the stiffness matrix ${\bf{K}}$ of~\eqref{eq:energy} is assembled once and reused for all sections.
Unlike the closed-contour lofting algorithm of Tang~\cite{tang2023control}, which relies on clamping and knot refinement, the present method requires neither clamping nor knot refinement.
This completes the energy-minimizing lofting construction for closed contours.

\section{Experimental results}
\label{sec:experiments}

The proposed method (Procedures~\ref{lst:proc1} and~\ref{lst:proc2}) is implemented in C and the Wolfram Language.
The experimental setup follows~\cite{tang2023control}: the longitudinal degree is $p=3$; the section degree takes the values $q=2,3,4,5$; 
the strain-energy weights are $\alpha=1$ and $\beta=0.2$, as recommended in~\cite{park2003lofted}; and the window factor is $\per=1.0$ unless stated otherwise.
Singular values are computed in IEEE double precision on the unscaled, row-stochastic collocation matrices and compared with the absolute tolerance $\sigma_{\mathrm{tol}}=10^{-8}$.
All comparisons use the same aligned sections, seam locations, chord-length parameters, and curve degrees.
The traditional closed-lofting baseline interpolates each section on its own periodic knots and refines the section curves to a common knot vector formed from their union.
The knot-selection baseline is the closed-contour method of~\cite{tang2023control} with the same degree and window factor.
To correct its knot selection, the same mechanism as in Procedure~\ref{lst:proc1} is applied:
for sections with an even point count, the selected knots are verified by the rank check, and a failed check falls back to the anchor knots.
The \emph{Proposed} method is the present rank-checked, section-wise energy-minimizing construction.
A reported net size $a\times b$ consists of $a$ longitudinal rows and $b$ closed-section control points, including the $q$ wrap-around control points.

The evaluation uses six case-study models: a water bottle (12 sections, 54--80 points), a shoe last (10 sections, 69--101 points),
an aero-engine blade (11 sections, 140--160 points), a human jaw (9 sections, 111--136 points),
a liver (48 sections, 40--90 points), and a human head (92 sections, 90--179 points), for 182 section curves in total.
Because the rank check is parity-aware, a section with an odd point count is certified structurally by the window condition (Theorem~\ref{thm:odd-window}), 
whereas an even point count requires the rank test.
Table~\ref{tab:parity} reports, for each model, how many sections have an odd and an even point count.

Figure~\ref{fig:surfaces} shows the four small models, whose row counts are modest.
For each model, the left panel shows the data points and the polyline through the section start points.
The middle panel is an open visualization of the same reconstruction, included to expose the seam; the right panel shows the closed surface produced by the present method.
Figures~\ref{fig:liver} and~\ref{fig:head} show the liver and head reconstructions with their control nets. 
The proposed approach achieves more efficient data reduction (Table~\ref{tab:counts-full}) than the traditional approach while providing visually pleasing B-spline surfaces.

\begin{figure}[htbp]
    \centering
    \includegraphics[width=0.9\textwidth]{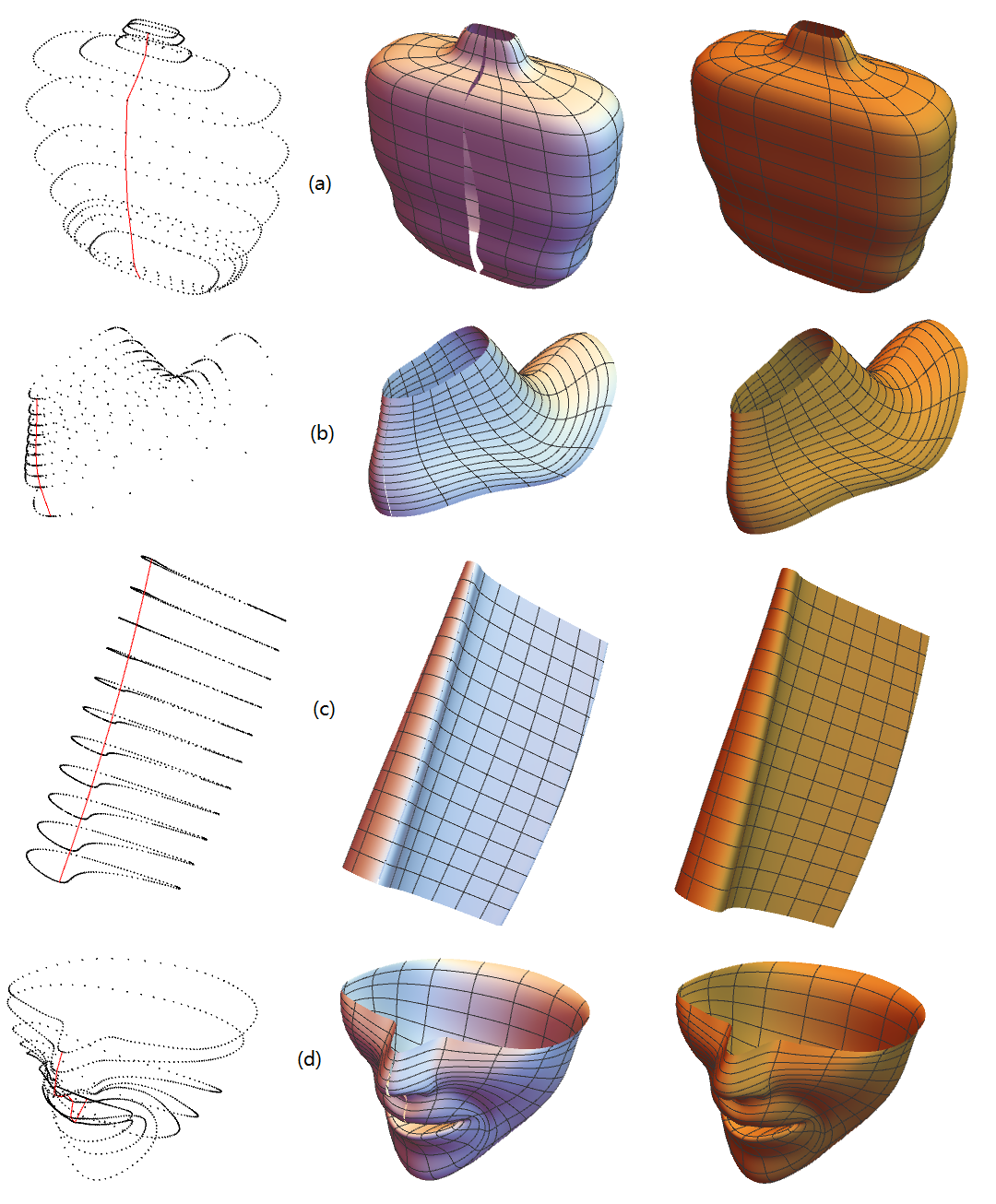}
    \caption{Lofted closed B-spline surfaces by the present method.
             Left: data points and the start-point polyline (red); middle: open visualization with visible seam; right: closed surface. (a)~Water bottle ($q=2$);
             (b)~shoe ($q=3$); (c)~blade ($q=4$); (d)~jaw ($q=5$).}
    \label{fig:surfaces}
\end{figure}

\begin{figure}[htbp]
    \centering
    \subfigure[Sampled points and start-point polyline.]{\includegraphics[height=3.1cm]{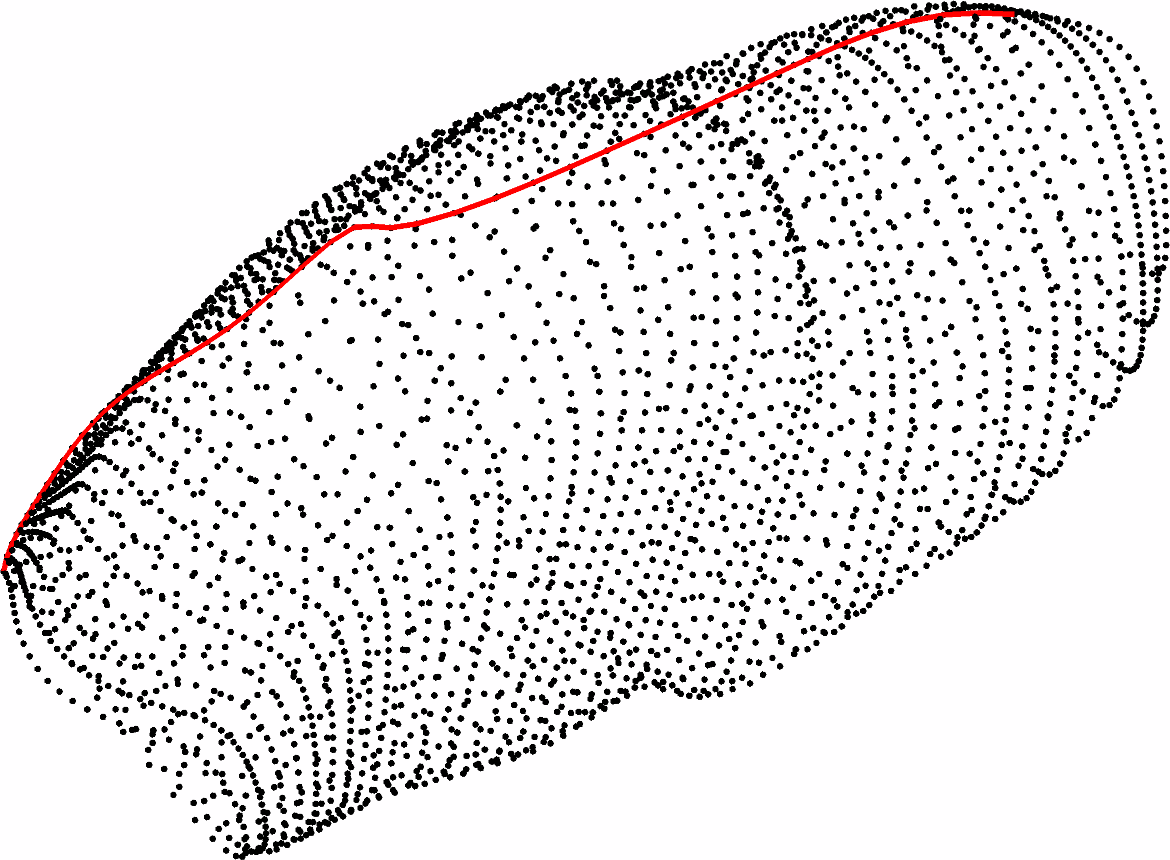}}
    \hspace{1mm}
    \subfigure[Control net ($48 \times 99$).]{\includegraphics[height=3.1cm]{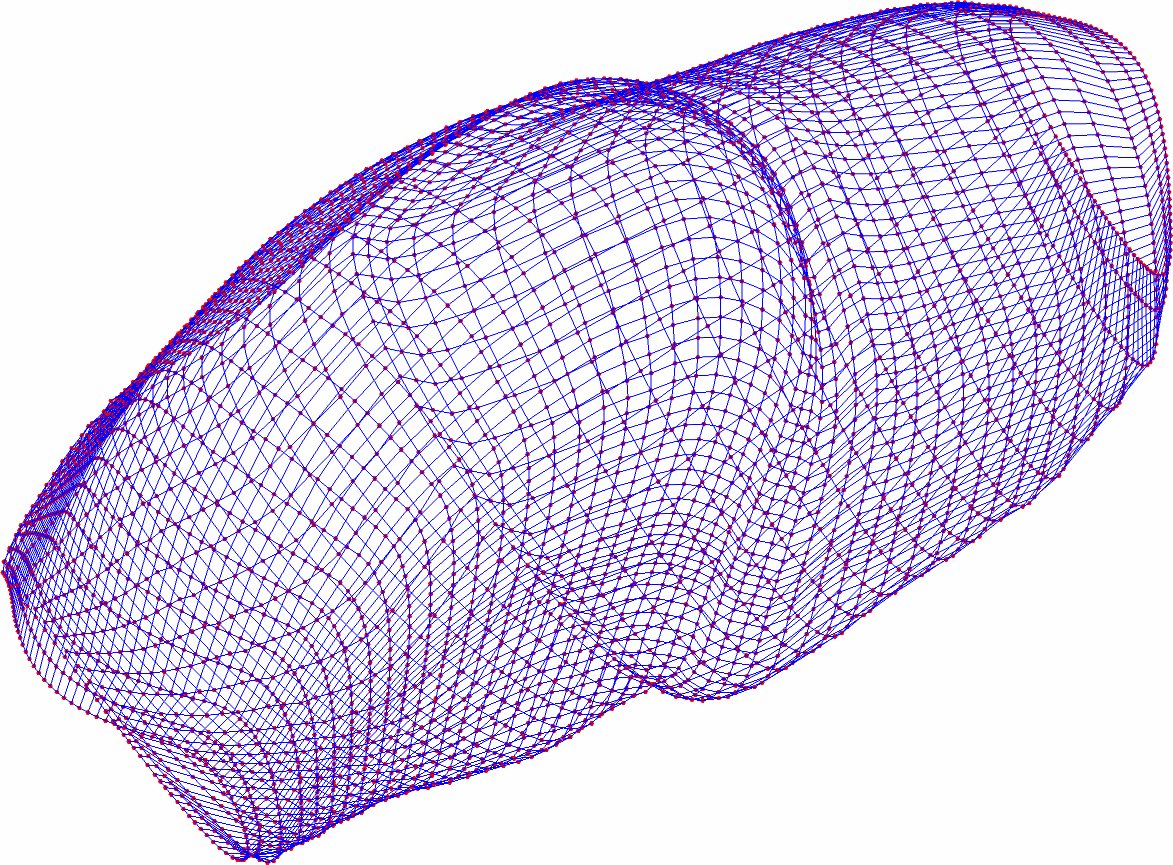}}
    \hspace{1mm}
    \subfigure[Second view.]{\includegraphics[height=3.2cm]{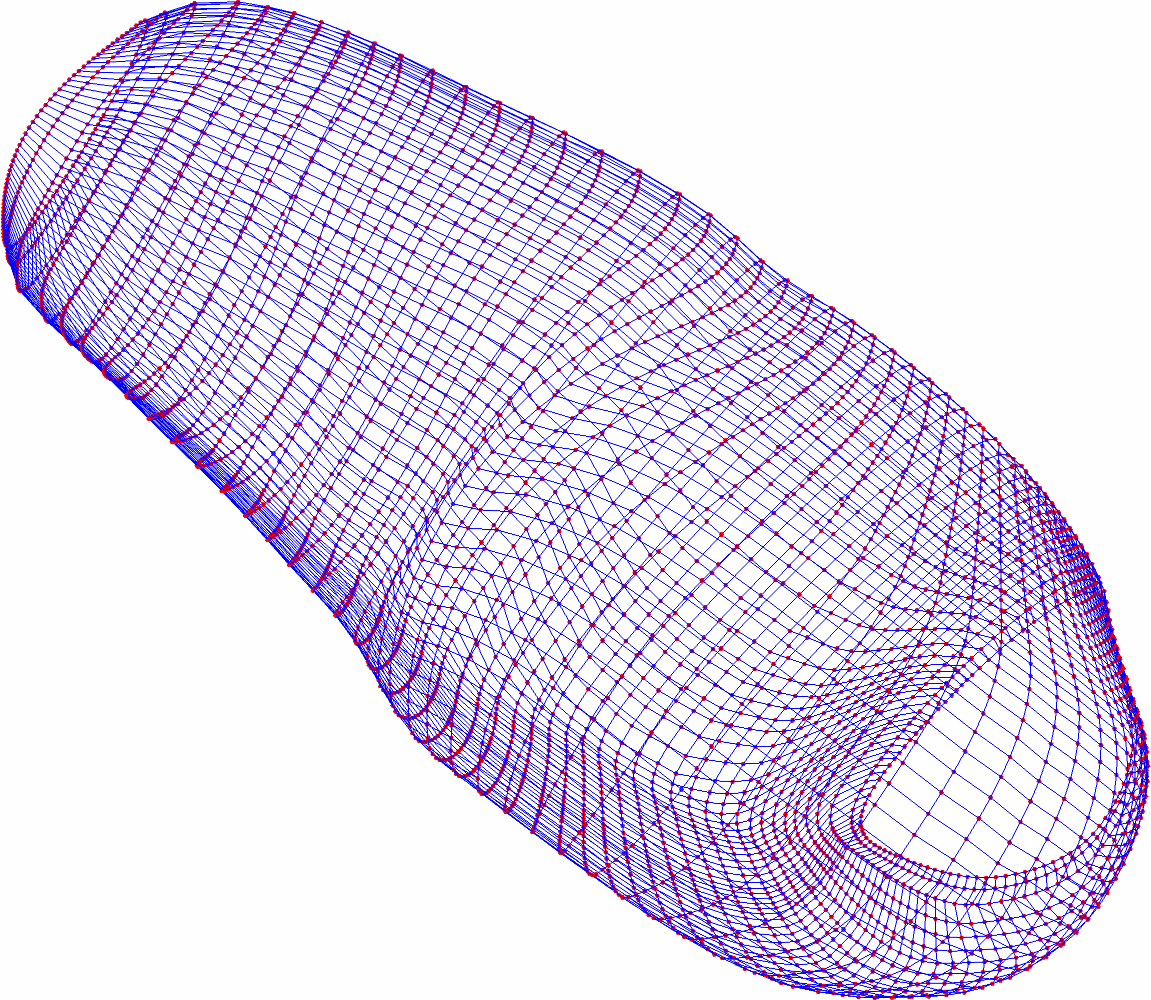}}
    \\[1mm]
    \subfigure[Lofted B-spline surfaces. Panels (a) and (c) of the strip show the open visualization from two viewpoints; panels (b) and (d) show the closed surface from the same viewpoints.]{\includegraphics[width=0.93\textwidth]{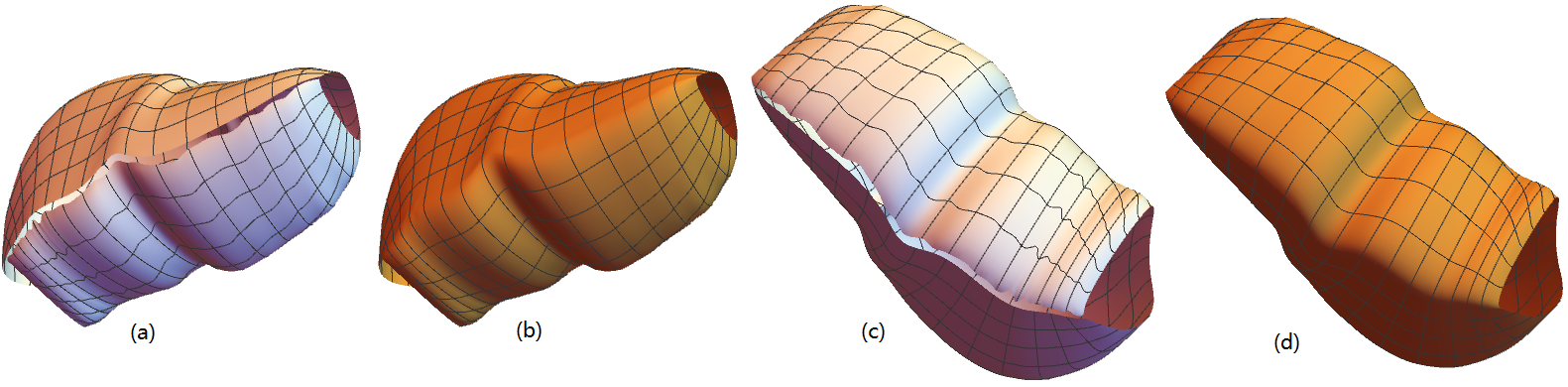}}
    \caption{Liver model via the present method with $(p,q)=(3,3)$, $\per=1.0$.}
    \label{fig:liver}
\end{figure}

\begin{figure}[htbp]
    \centering
    \subfigure[Sampled points and start-point polyline.]{\includegraphics[height=4.4cm]{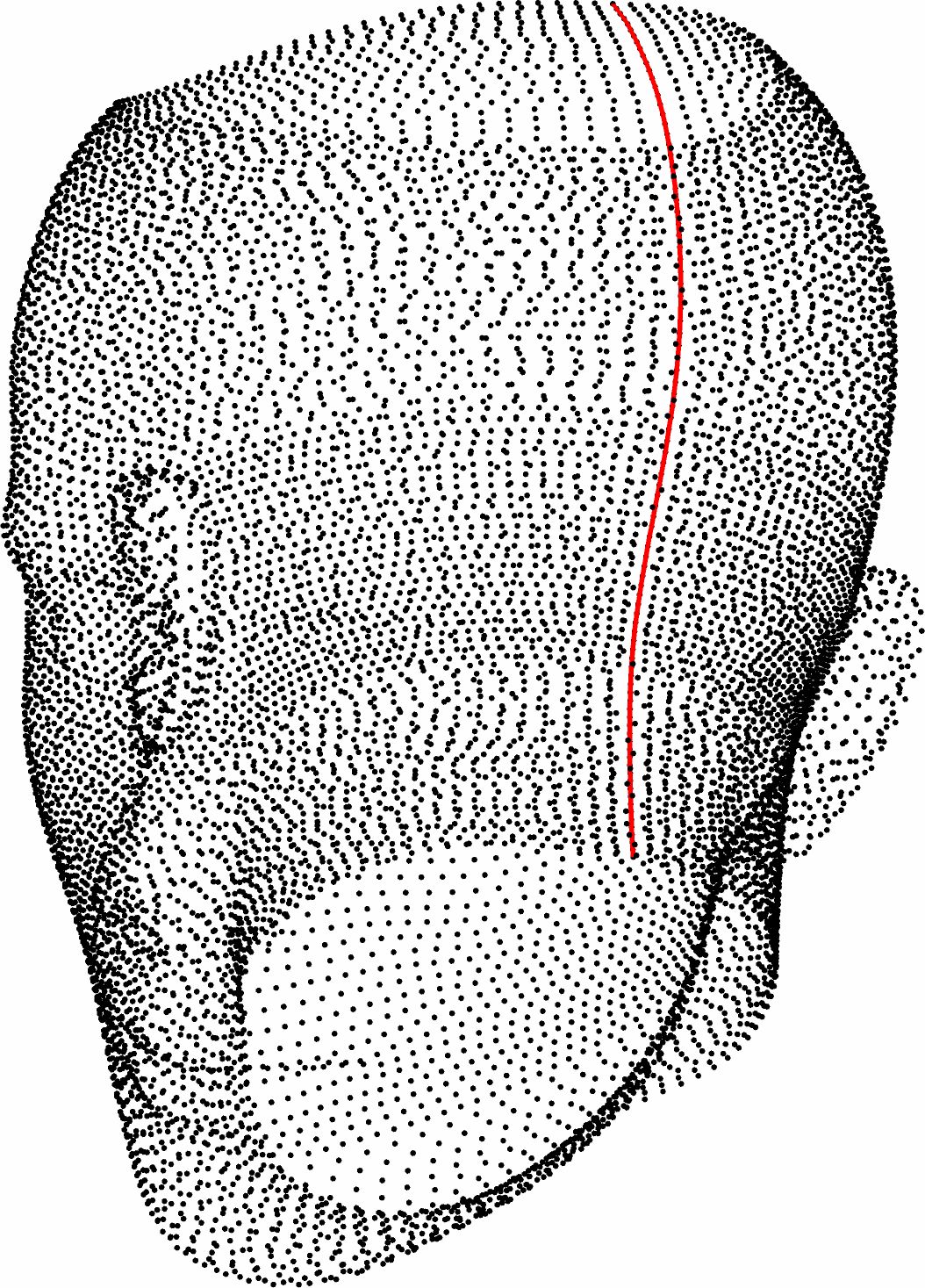}}
    \hspace{2mm}
    \subfigure[Control net ($92 \times 182$).]{\includegraphics[height=4.4cm]{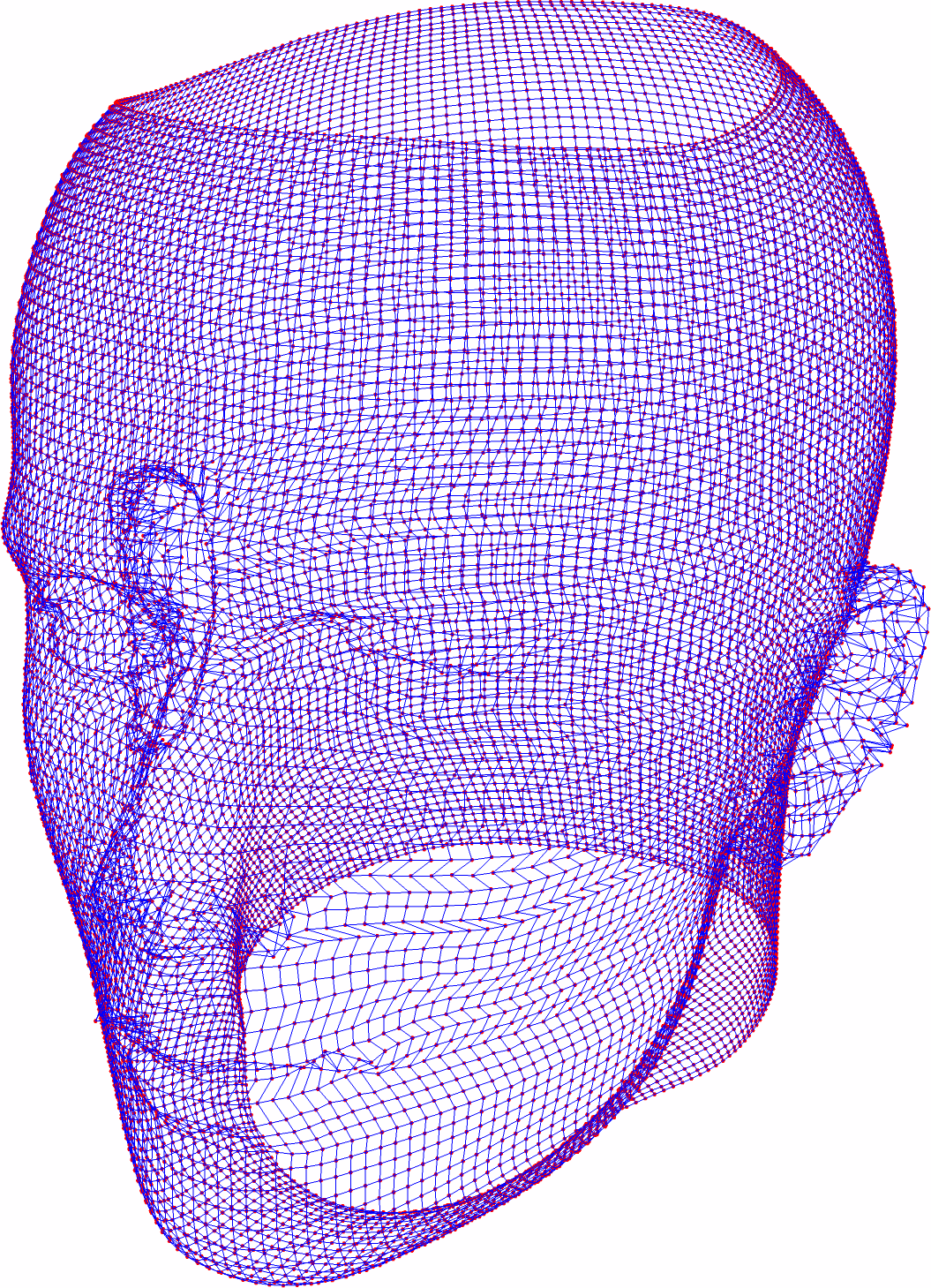}}
    \hspace{2mm}
    \subfigure[Second view.]{\includegraphics[height=4.1cm]{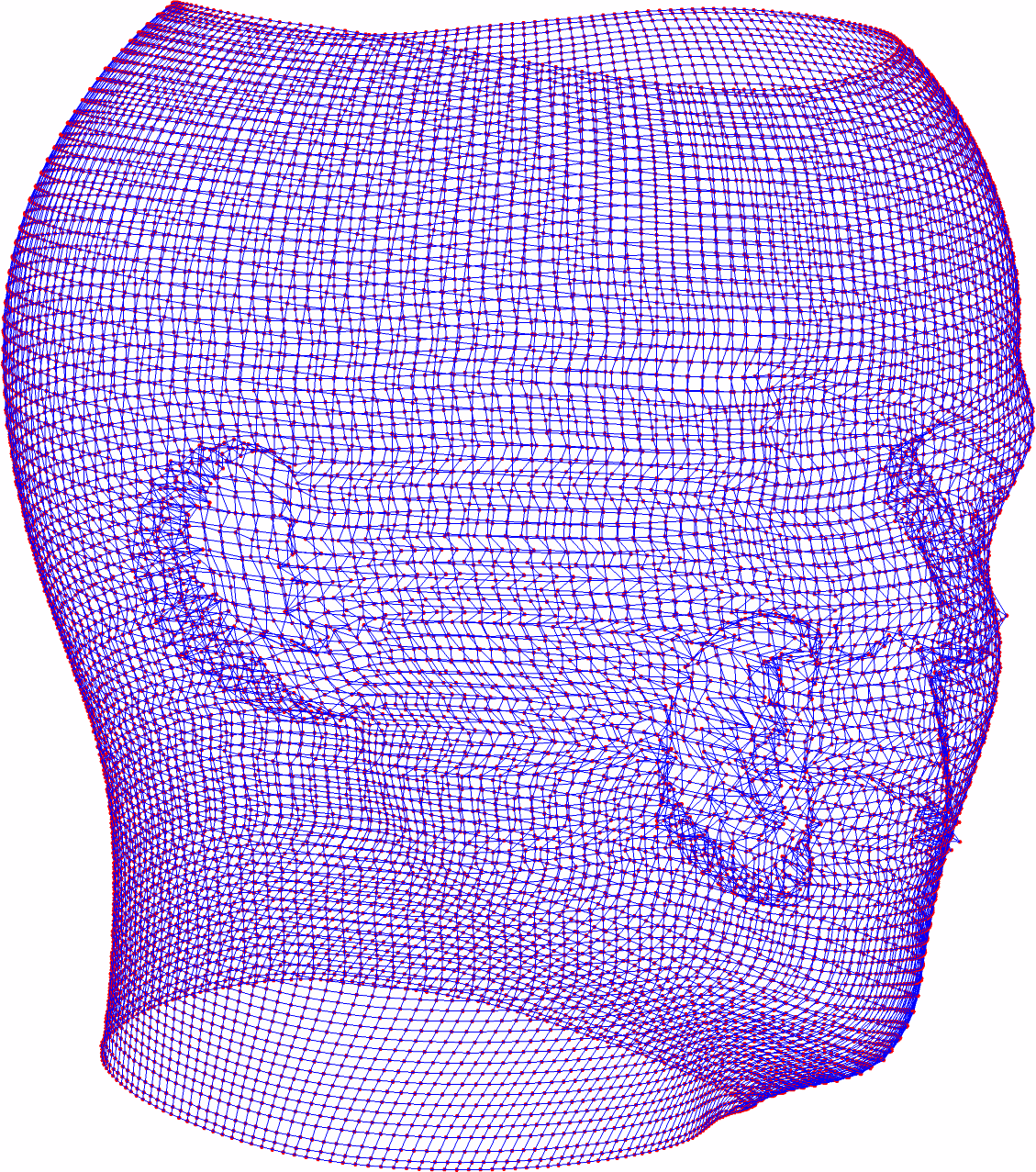}}
    \\[1mm]
    \subfigure[Lofted B-spline surfaces. Panels (a) and (c) of the strip show the open visualization from two viewpoints; panels (b) and (d) show the closed surface from the same viewpoints.]{\includegraphics[width=0.93\textwidth]{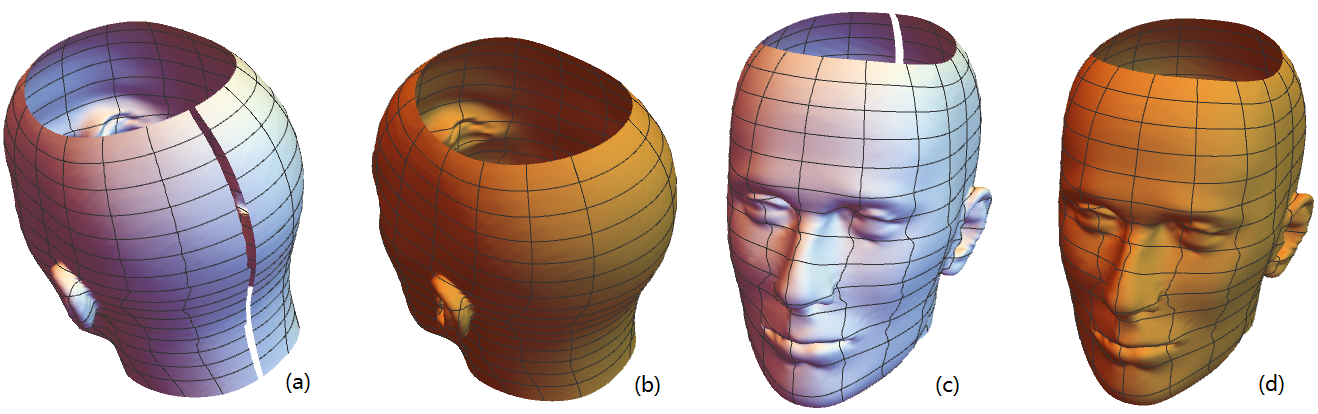}}
    \caption{Human head via the present method with $(p,q)=(3,3)$, $\per=1.0$.}
    \label{fig:head}
\end{figure}

\begin{table}[htbp]
    \centering
    \caption{Number of sections with odd and even point counts ($n_i+1$) for each model.}
    \label{tab:parity}
    \begin{tabular}{lccc}
        \toprule
        Model  & sections & odd $n_i+1$ & even $n_i+1$ \\
        \midrule
        Bottle & 12       & 7           & 5            \\
        Shoe   & 10       & 8           & 2            \\
        Blade  & 11       & 5           & 6            \\
        Jaw    & 9        & 5           & 4            \\
        Liver  & 48       & 21          & 27           \\
        Head   & 92       & 44          & 48           \\
        \midrule
        Total  & 182      & 90          & 92           \\
        \bottomrule
    \end{tabular}
\end{table}

\subsection{Numerical rank verification and cost}
\label{subsec:margins}

Table~\ref{tab:margins} gives the smallest rank-test margin over the sections of each model.
Only the sections with an even point count require the rank test, since odd point counts are certified by the window condition (Theorem~\ref{thm:odd-window}); 
among these, all selected square subsystems pass the test, and the anchor fallback of Procedure~\ref{lst:proc1} is never triggered; 
the same holds for the knot-selection baseline of~\cite{tang2023control}.
The fallback branch was therefore exercised separately on the uniform cubic counterexample of Proposition~\ref{prop:singular} and on a $10^{-10}$ knot perturbation.
Their computed margins, $8.3\times10^{-17}$ and $1.0\times10^{-10}$, are both rejected, whereas the anchor fallback configurations are accepted, 
with margins of order $10^{-4}$--$10^{-1}$ (Table~\ref{tab:margins}).
As reported in Table~\ref{tab:timing}, the full pipeline takes at most $0.37$~s, of which the per-section rank checks account for $0.02$--$0.20$~s.
All timings are wall-clock measurements recorded under a single-user load on an AMD Threadripper 9980X workstation.

The rank test depends on a single numerical tolerance $\sigma_{\mathrm{tol}}$, and its decisions are insensitive to this choice.
Every accepted candidate has a minimum singular value of at least $3.83\times10^{-4}$ (Table~\ref{tab:margins}), 
whereas the constructed singular and near-singular configurations have margins $8.3\times10^{-17}$ and $1.0\times10^{-10}$, respectively, i.e., at or below the floating-point noise level.
The accepted and rejected cases are thus separated by more than six orders of magnitude, with $\sigma_{\mathrm{tol}}=10^{-8}$ inside the gap; 
any tolerance in this gap yields the same accept/reject decisions on the six models.
The tolerance is therefore a round-off safeguard, not a parameter that tunes the outcome.

\begin{table}[htbp]
    \centering
    \caption{Per-section rank-test margins ($\per = 1.0$): minimum over the sections of each model of $\sigma_{\min}$ for the selected square subsystem and for the anchor configuration, at degrees $q = 2, 3, 4, 5$.}
    \label{tab:margins}
    \small
    \begin{tabular}{llcccc}
        \toprule
        Model & margin & $q{=}2$ & $q{=}3$ & $q{=}4$ & $q{=}5$\\
        \midrule
        \multirow{2}{*}{Bottle} & selected & $4.06\times10^{-2}$ & $1.22\times10^{-2}$ & $1.5\times10^{-2}$ & $5.63\times10^{-3}$\\
                                & anchors  & $4.55\times10^{-2}$ & $3.18\times10^{-2}$ & $1.48\times10^{-2}$ & $9.06\times10^{-3}$\\
        \multirow{2}{*}{Shoe}   & selected & $5.23\times10^{-2}$ & $5.14\times10^{-3}$ & $7.35\times10^{-4}$ & $3.83\times10^{-4}$\\
                                & anchors  & $3.12\times10^{-1}$ & $1.56\times10^{-1}$ & $8.48\times10^{-3}$ & $1.51\times10^{-3}$\\
        \multirow{2}{*}{Blade}  & selected & $9.32\times10^{-2}$ & $3.5\times10^{-2}$ & $2.18\times10^{-2}$ & $5.73\times10^{-3}$\\
                                & anchors  & $2.39\times10^{-1}$ & $1.53\times10^{-1}$ & $3.7\times10^{-2}$ & $1.82\times10^{-2}$\\
        \multirow{2}{*}{Jaw}    & selected & $9.17\times10^{-2}$ & $1.08\times10^{-1}$ & $2.12\times10^{-2}$ & $2.41\times10^{-2}$\\
                                & anchors  & $3.84\times10^{-1}$ & $2.48\times10^{-1}$ & $1.02\times10^{-1}$ & $5.38\times10^{-2}$\\
        \multirow{2}{*}{Liver}  & selected & $2.15\times10^{-1}$ & $1.3\times10^{-1}$ & $8.41\times10^{-2}$ & $3.52\times10^{-2}$\\
                                & anchors  & $4.83\times10^{-1}$ & $3.18\times10^{-1}$ & $1.7\times10^{-1}$ & $9.76\times10^{-2}$\\
        \multirow{2}{*}{Head}   & selected & $1.39\times10^{-1}$ & $9.23\times10^{-2}$ & $6.7\times10^{-2}$ & $4.53\times10^{-2}$\\
                                & anchors  & $4.93\times10^{-1}$ & $3.3\times10^{-1}$ & $2.03\times10^{-1}$ & $1.3\times10^{-1}$\\
        \bottomrule
    \end{tabular}
\end{table}

\begin{table}[htbp]
    \centering
    \caption{Timing components of the proposed method ($\per = 1.0$): full pipeline and the separately measured per-section rank checks.}
    \label{tab:timing}
    \begin{tabular}{lcccccc}
        \toprule
        & Bottle & Shoe & Blade & Jaw & Liver & Head\\
        \midrule
        full pipeline [s]  & 0.03 & 0.03 & 0.06 & 0.04 & 0.10 & 0.37\\
        rank checks [s]  & 0.02 & 0.02 & 0.04 & 0.02 & 0.05 & 0.20\\
        \bottomrule
    \end{tabular}
\end{table}

\subsection{Control-point counts}
\label{subsec:counts}

On these six cases at $\per=1$, proposed method uses 83--98\% fewer control points than traditional lofting and no more than the knot-selection method of~\cite{tang2023control} (Table~\ref{tab:counts-full}).
Because every selected subsystem passes, the safeguard introduces no fallback knots in these runs.
Table~\ref{tab:head-compare} examines scaling on the head model: the sub-models formed by the first $m + 1$ sections are lofted for increasing $m$ and window factors $\per \in \{0.25, 0.5, 0.75, 1.0\}$.
The two methods produce identical counts except at $\per = 1.0$ on the two largest sub-models, where the present method uses two fewer control points per section.

\begin{table}[htbp]
    \centering
    \caption{Control net sizes for the full models ($\per = 1.0$ and $q \in \{2, 3, 4, 5\}$).}
    \label{tab:counts-full}
    \begin{tabular}{lccccccc}
        \toprule
        Model & sections & $q$ & $\min(n_i)$ & $\max(n_i)$ & Traditional & Knot selection~\cite{tang2023control} & Proposed\\
        \midrule
        Bottle & 12 & 2 &  53 &  79 & $12 \times 801$   & $12 \times 83$  & $12 \times 83$\\
        Shoe   & 10 & 3 &  68 & 100 & $10 \times 926$   & $10 \times 144$ & $10 \times 135$\\
        Blade  & 11 & 4 & 139 & 159 & $11 \times 1601$  & $11 \times 221$ & $11 \times 221$\\
        Jaw    &  9 & 5 & 110 & 135 & $9 \times 1114$   & $9 \times 195$  & $9 \times 193$\\
        Liver  & 48 & 3 &  39 &  89 & $48 \times 3035$  & $48 \times 99$  & $48 \times 99$\\
        Head   & 92 & 3 &  89 & 178 & $92 \times 10630$ & $92 \times 184$ & $92 \times 182$\\
        \bottomrule
    \end{tabular}
\end{table}

\begin{table}[htbp]
    \centering
    \caption{Control net row sizes for the human head model by different $m$ and $\per$ ($q = 3$).}
    \label{tab:head-compare}
    \begin{tabular}{cccccccc}
        \toprule
        $m$                              &             &   5 &  10 &  20 &  40 &  80 &  91 \\ \cline{1-8}
        $\min(n_i)$                      &             &  90 &  90 &  90 &  90 &  89 &  89 \\
        $\max(n_i)$                      &             &  97 &  97 &  99 & 178 & 178 & 178 \\
        $\mathrm{avg}(n_i)$              &             &  93 &  93 &  94 & 123 & 116 & 114 \\ \cline{1-8}
        Traditional (closed)~\cite{tang2023control} & {}  & 656 & 1123 & 2086 & 5237 & 9590 & 10630 \\ \cline{1-8}
        \multirow{4}{*}{Knot selection (closed)~\cite{tang2023control}} & $\per=1.00$ & 101 &  101 &  103 &  182 &  184 &  184 \\
                                          & $\per=0.75$ & 131 &  161 &  181 &  346 &  353 &  353 \\
                                          & $\per=0.50$ & 169 &  211 &  230 &  423 &  434 &  434 \\
                                          & $\per=0.25$ & 220 &  307 &  383 &  772 &  835 &  835 \\ \cline{1-8}
        \multirow{4}{*}{Proposed}         & $\per=1.00$ & 101 &  101 &  103 &  182 &  182 &  182 \\
                                          & $\per=0.75$ & 131 &  161 &  181 &  346 &  353 &  353 \\
                                          & $\per=0.50$ & 169 &  211 &  230 &  423 &  434 &  434 \\
                                          & $\per=0.25$ & 220 &  307 &  383 &  772 &  835 &  835 \\
        \bottomrule
    \end{tabular}
\end{table}

\subsection{Energy comparison}
\label{subsec:fairness}

The proposed method minimizes~\eqref{eq:energy} over the interpolants in its own common-knot spline space.
Let $E_{\mathrm{sel}}$ and $E_{\mathrm{prop}}$ denote the section strain energies of the previous knot-selection method~\cite{tang2023control} and of the proposed method, respectively.
So Table~\ref{tab:energy} reports a direct comparison using identical section data, parametrization, degrees, energy weights, and $\per=1.0$.
Across all 182 sections, the proposed method has no larger computed strain energy within numerical precision.
The largest ratio $E_{\mathrm{sel}}/E_{\mathrm{prop}}$ is $1.094$; equivalently, the proposed method's energy is up to $8.6\%$ lower.

\begin{table}[htbp]
    \centering
    \caption{Section strain energy for the knot-selection method~\cite{tang2023control} and the proposed method ($\per=1.0$); ratios are $E_{\mathrm{sel}}/E_{\mathrm{prop}}$ per section.}
    \label{tab:energy}
    \begin{tabular}{lccccc}
        \toprule
        Model & $q$ & mean $E_{\mathrm{sel}}$ & mean $E_{\mathrm{prop}}$ & mean ratio & max ratio \\
        \midrule
        Bottle       & 2 & $6.47 \times 10^{1}$ & $6.42 \times 10^{1}$ & 1.006 & 1.011 \\
        Shoe         & 3 & $4.69 \times 10^{4}$ & $4.68 \times 10^{4}$ & 1.003 & 1.006 \\
        Blade        & 4 & $2.64 \times 10^{5}$ & $2.63 \times 10^{5}$ & 1.005 & 1.030 \\
        Jaw          & 5 & $9.12 \times 10^{4}$ & $9.11 \times 10^{4}$ & 1.002 & 1.013 \\
        Liver        & 3 & $2.74 \times 10^{3}$ & $2.74 \times 10^{3}$ & 1.002 & 1.007 \\
        Head         & 3 & $1.15 \times 10^{3}$ & $1.13 \times 10^{3}$ & 1.015 & 1.094 \\
        \bottomrule
    \end{tabular}
\end{table}

\section{Discussion and analysis}
\label{sec:discussion}

\subsection{Even-cardinality behavior and the anchor fallback}
The exact singularity condition at even cardinality remains open, so the algorithm applies a numerical rank test and falls back to anchor knots when the selected first-fit candidate fails.
The full-row-rank statement now proved as Theorem~\ref{thm:full-rank-lift} originated in an earlier manuscript from 2018, 
later posted as arXiv:2202.06330~\cite{tang2022optimal}, and the first-fit construction used here was developed from that statement.
At that time the parity dependence of periodic B-spline interpolation was not recognized; the present paper isolates it through Proposition~\ref{prop:singular}.
The fallback is therefore a safeguard rather than the normal operating path.
In the experiments reported here the selected first-fit candidates pass the rank test on every even-cardinality section, 
and the odd-cardinality sections are certified by the window condition, so the fallback is never triggered on the 182 sections (Table~\ref{tab:margins}); the same holds for the knot-selection baseline of Tang~\cite{tang2023control} using Procedure~1 of that paper.
The same algorithm has also been used extensively by the authors and their industrial collaborators on closed-contour data with chord-length parameterization; 
in our experience it has not exhibited numerical instability, and the resulting surfaces were visually pleasant.

We therefore conjecture that, under chord-length or centripetal parameterization, the probability of obtaining a singular collocation matrix at even cardinality is very small.
The counterexample of Proposition~\ref{prop:singular} requires a coordinated placement of parameters and knots, 
with parameters at the midpoints between consecutive knots for odd degree, and such a critical configuration is unlikely to arise from measured contours.
A structural characterization that quantifies this likelihood, or that proves the first-fit candidate nonsingular under a checkable parameterization condition, remains open.

\subsection{Choice of the window factor \texorpdfstring{$\per$}{per}}
\label{subsec:per-tradeoff}

The window factor balances knot sharing against geometric freedom.
At $\per=0$ the selection collapses to the anchor knots, and we use the per-section anchor interpolation as the no-sharing reference; 
for $\per>0$ the common-knot energy minimizer is used. At $\per=1$ the selection uses the full flexibility windows and maximizes sharing.
Figure~\ref{fig:per-tradeoff}(a) reports the resulting number of control points per section for the six models as $\per$ varies over $\{0,0.2,\ldots,1.0\}$.
The count decreases monotonically with $\per$ in every model.
At $\per=0.8$ the method already achieves $94$--$98\%$ of the total control-point reduction obtained at $\per=1$, 
and the final step from $0.8$ to $1.0$ still removes a substantial number of control points on the bottle, liver, and head models.

Figure~\ref{fig:per-tradeoff}(b) reports the mean section strain energy normalized by its value at $\per=0$.
For five of the six models the relative energy remains within about $1.4\%$ over the whole range, 
and the variation is not monotone: the blade, bottle, and jaw have slightly lower energy at intermediate values, whereas the shoe and liver change by less than $0.5\%$.
The head is the exception: its relative energy stays near $1.006$ up to $\per=0.8$ and then rises to $1.087$ at $\per=1$, while the control count only drops from $340$ to $182$.
Thus the last increment of sharing has a small compression return and a comparatively large fairness cost for this model.

The practical guideline is model-dependent.
When compression is the primary objective, $\per=1.0$ is appropriate.
When fairness is prioritized, or when the energy curve rises sharply near $\per=1$, $\per\approx0.8$ is a better compromise: 
it retains most of the compression and keeps the energy close to the no-sharing endpoint.
For the head model, for example, $\per=0.8$ uses $340$ rather than $182$ control points but keeps the relative energy near $1.006$ instead of $1.087$.
Because the trade-off is not uniform across models, $\per$ should be chosen per model rather than fixed globally.

\begin{figure}[htbp]
    \centering
    \subfigure[Control points per section (log scale).]{\includegraphics[width=0.48\textwidth]{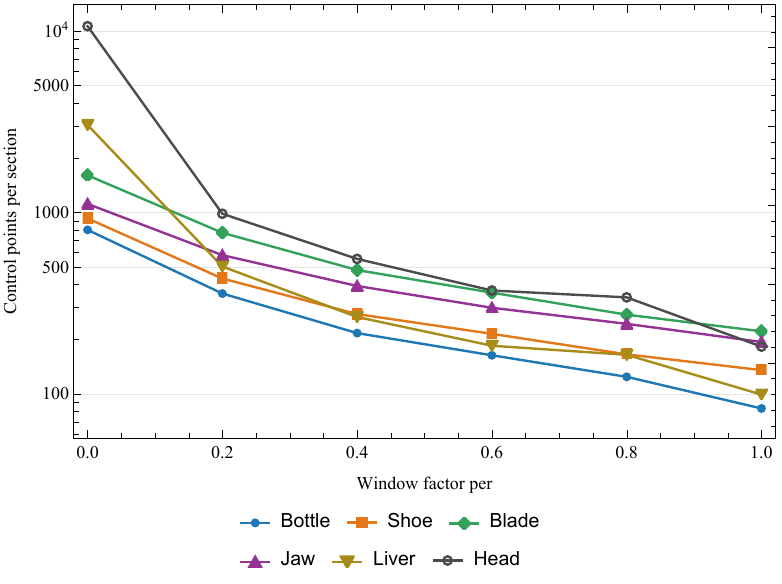}}
    \hfill
    \subfigure[Mean section strain energy relative to $\per=0$.]{\includegraphics[width=0.48\textwidth]{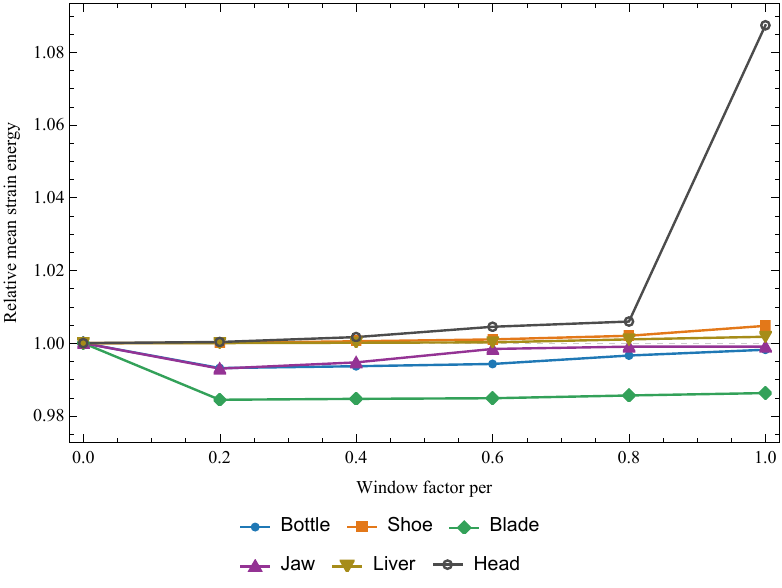}}
    \caption{Effect of the window factor $\per$ on the six models: (a) control points per section on a logarithmic scale; (b) mean section strain energy normalized by its value at $\per=0$.}
    \label{fig:per-tradeoff}
\end{figure}

\subsection{Comparison with T-spline skinning}
The closest T-spline work to ours is the periodic T-spline skinning of Li~\etal\cite{li2015surface}, 
which operates on closed NURBS section curves and represents the lofted T-spline surface in semi-NURBS form.
Our input is different: we start from rows of closed contour points rather than from compatible NURBS curves.
Applying the periodic T-spline construction to point rows therefore requires an interpolation step for every row.
If each row is interpolated independently, the row knots again accumulate in the tensor-product (or T-spline) representation, 
and the control-point explosion that motivates this paper reappears before T-spline skinning can help.
If, instead, the rows are interpolated with the knot-reuse construction of Tang~\cite{tang2023control}, 
the input curves already share a compact knot set; 
T-spline skinning then naturally uses fewer control points than tensor-product B-spline lofting.
Such a pipeline would compare a T-spline representation built on an already reduced input with a tensor-product method, 
and would not isolate the effect of the lofting representation itself; a direct quantitative comparison would therefore not be meaningful.

Li~\etal also compare under this favorable setting: in their examples the input NURBS curves have relatively few control points and share many knots, 
as the figures in~\cite{li2015surface} illustrate, and they report a reduction by a factor of roughly 1.5--2 relative to NURBS lofting.
Our experiments address a different question, namely how many control points are needed to interpolate closed contour point rows with certified solvability, 
and report reductions of 83--98\% relative to traditional interpolating lofting.
The two results are therefore complementary rather than directly comparable: T-spline skinning provides local refinement and a more flexible representation for compatible curve inputs, 
whereas the present method supplies the missing solvability foundation and knot reduction for tensor-product interpolation of closed point rows.

\section{Conclusions and future work}
\label{sec:conclusions}

Full row rank of a section collocation matrix guarantees interpolation for closed section data and invertibility of the corresponding energy-minimizing saddle system.
The periodic window condition previously used to infer this rank is sufficient at odd point counts (Theorem~\ref{thm:odd-window}) but may fail at even point counts, 
as the uniform and non-uniform counterexamples of Proposition~\ref{prop:singular} show.
The present method therefore combines the natural and shifting anchor knots, 
which are nonsingular on given parameter values, with a parity-aware rank check: at odd point counts the first-fit candidate is accepted by the window condition, 
and only at even point counts is a singular-value test applied, with the anchor configuration as a safe fallback.
The full-row-rank lifting theorem (Theorem~\ref{thm:full-rank-lift}) transfers this square subsystem to the rectangular common-knot system, 
making the energy-minimizing construction feasible for closed section data.

On six case-study models at degrees $q=2,3,4,5$ and window factor $\per=1$, 
the method used 83--98\% fewer control points than traditional closed lofting and no more than the previous knot-selection method, with no larger section energies.
The rank decisions are insensitive to the numerical tolerance, because the accepted and rejected margins are separated by more than six orders of magnitude.
The main remaining problem is the even-cardinality case: 
characterizing the exact singularity condition of the periodic collocation matrix at even point counts would replace the present numerical rank check by a structural singular/non-singular test, 
making the construction fully parity-independent.

\appendix

\section{Proofs}
\label{app:proofs}

\begin{proof}[Proof of Lemma~\ref{lem:nested}]
    Let ${\bf{U}}_\infty$ be the bi-infinite periodization of ${\bf{D}}$ and $N^{\infty}_{j}$, $j \in \mathbb{Z}$, the B-splines on it, indexed as in Section~\ref{subsec:periodic}; since $\hat n \ge q$ each support spans at most one period, and the periodic extension of $N^{\mathrm{per}}_{i,q,{\bf{D}}}$ equals $\sum_{\ell} N^{\infty}_{i+\ell(\hat n+1)}$, which is the periodization~\eqref{eq:per-basis}.

    \emph{Local independence.} On any open interval, a vanishing combination of the periodic basis lifts to a locally finite combination of the ordinary B-splines that meet that interval.
    Their local linear independence~\cite{de1978practical} forces the coefficients of every active residue class to vanish; hence the periodic basis is locally, and therefore globally, linearly independent.

    \emph{Span.} Let $\mathcal{P}$ be the space of $1$-periodic $C^{q-1}$ piecewise polynomials of degree $\le q$ with breakpoints in ${\bf{U}}_\infty$; $\dim \mathcal{P} = \hat n + 1$~\cite[\S 8.1]{schumaker2007spline}.
    Each $f \in \mathcal{P}$ has a unique Curry--Schoenberg representation $f = \sum_j a_j N^{\infty}_{j}$ on the line~\cite{de1978practical}; periodicity of $f$ and uniqueness force $a_{j+(\hat n+1)} = a_j$, and regrouping the sum modulo $\hat n+1$ exhibits $f \in \mathbb{S}_{q,{\bf{D}}}$.
    With independence, $\mathbb{S}_{q,{\bf{D}}} = \mathcal{P}$.

    \emph{Nesting.} Any $f \in \mathbb{S}_{q,\bar{{\bf{D}}}}$ is $1$-periodic, $C^{q-1}$, and polynomial on each ${\bf{D}}$-span (each is contained in a $\bar{{\bf{D}}}$-span), so $f \in \mathbb{S}_{q,{\bf{D}}}$; expanding the coarse basis in the fine one defines ${\bf{M}}$.
    If ${\bf{c}}^{\mathsf{T}}{\bf{M}} = {\bf{0}}$ then $\sum_i c_i N^{\mathrm{per}}_{i,q,\bar{{\bf{D}}}} \equiv 0$, so ${\bf{c}} = {\bf{0}}$ by independence on $\bar{{\bf{D}}}$ (using $n \ge q$); hence $\operatorname{rank} {\bf{M}} = n+1$.
\end{proof}


\bibliographystyle{unsrtnat}
\bibliography{refs}

\end{document}